\newtheorem*{theorem}{Theorem}
\begin{document}

\title{Linear scaling conventional Fock matrix calculation \\ with stored non-zero two-electron integrals}

\author{Alexander V. Mitin}
\affiliation{Moscow Institute of Physics and Technology,  9 Institutskiy per., Dolgoprudny, Moscow Region, 141701, Russia;} 
\affiliation{Joint Institute for High Temperatures of RAS, Izhorskaya st. 13 Bd.2, 125412 Moscow, Russia}
\email{$mitin.av@mipt.ru$}
\date{\today}
\begin{abstract}
It was shown that the conventional Fock matrix calculation, which using stored non-zero two-electron integrals, density prescreening, and data compression methods possesses the linear scaling property with respect to the problem size. This result follows from the proven theorem, which reads that the total number of non-zero two-electron integrals scales asymptotically linearly with respect to the number of basis functions for large molecular systems. An analysis of the Fock matrix calculation with density and density difference prescreening shows that its linear scaling property arises due to the asymptotically linear scaling properties of the number of non-zero two-electron integrals and the linear scaling property of the number of leading matrix elements of the density matrix. The use of the density and the density difference prescreening in the Fock matrix calculation consequently enforces this property. The conventional Fock matrix calculation with the stored non-zero two-electron integrals has been reformulated to the Fock matrix calculation with density or density difference prescreening by using the data compression technique to store the non-zero two-electron integrals and their indices. The numerical calculations with this method show that its linear scaling property begins from 2500 to 4000 basis functions in dependence on the basis function type in molecular calculations.
\end{abstract}
\pacs{31.15.Ar}
\keywords{One-electron integrals, Two-electron integrals, Fock operator, Molecular orbitals, Fock matrix calculation}
\maketitle

\section{Introduction}

It is well known that the Fock matrix calculation is one of the most computationally expansive step in the Hartree-Fock and the density functional theory methods \cite{Hatree_ProcCambridgePhilSoc_128_24_89,Fock_ZPhys_1930_61_126, HohenbergKohn_PR_1964_136_B864,KohnSham_PR_1965_140_A1133}. Its efficient realization is mainly defined by explicit inclusion of the main hardware features of the used computer into developed numerical algorithm and program code. The main characteristic of computers since 1950th was significantly faster CPU in comparison with input/output (I/O) capabilities. For this type of computers have been developed efficient numerical algorithms of the direct Fock matrix assembly by calculating only Fock matrix changes for known density difference with subsequent construction of the full Fock matrix from these increments \cite{AFK_JCC_1982_3_385, WhiteHeadGordon_JCP_1994_101_6593, SSF_Science_1996_271_51, CSA_JCP_1996_104_4685}.

However, now, current trend in computer architecture consists in subsequent significant improvement of its I/O capabilities in comparison with growing of its CPU performance. This changes appears due to solid state drive (SSD) technology. This is log term trend because the limits of this technology is too far to be exhaust. For example, the third generation of Microsemi Adaptec SmartRAID controllers permits to build a disc system with I/O rate up to 6.6 GB/sec, while the advanced Seagate Nytro XP7200 PCIE add-in SSD has sequential sustained read performance equals to 10000 MB/s at capacity 7.7 TB. The efficient data compression technique \cite{Mitin_Theochem_2002_115} permits to keep non-zero two-electron integrals mainly in one byte words. This means that the modern external disk systems can read the non-zero two-electron integrals from the external disk system to the main computer memory with rate from $6\times10^9$ to $10\times10^9$ integrals per second. On the other hand, such high speed of two-electron integrals calculations has not yet been demonstrated on single/double socket computers.

Therefore, a critical analysis and reinvestigation of the conventional method Fock matrix construction from the non-zero two-electron integral stored on an external disk system is an important problem, which can open new perspectives for a development of an efficient algorithm of Fock matrix calculation on computers with extended I/O capabilities.  

In this connection, the scaling property of the total number of non-zero two-electron integrals was investigated and presented in Section \ref{Section_2}. Then, a theoretical analysis of the Fock matrix calculation with density and density difference prescreening are given in Section \ref{Section_3}. After that, the conventional Fock matrix calculation from the stored non-zero two-electron integrals with the density difference prescreening and the data compression was formulated and its linear scaling property was demonstrated in numerical calculations. These results are presented in Section \ref{Section_4}.

\section{Linear scaling property of non-zero two-electron integrals} \label{Section_2}

It is well known that the direct Fock matrix calculation \cite{AFK_JCC_1982_3_385}, which uses the density difference for an evaluation of the two-electron integrals with the fast multipole techniques \cite{WhiteHeadGordon_JCP_1994_101_6593, SSF_Science_1996_271_51} and the quantum chemical tree code \cite{CSA_JCP_1996_104_4685}, super-linear scales with respect to the problem size. The scalability of such a calculation is noticeably lower than quadratic with respect to the problem size and in some cases it is close to linear dependence. The explanations of this phenomenon were also given in these publications. They are based on an numerical estimation of scaling property of the total number of non-zero two-electron integrals. Formally the total number of two-electron integrals scales as $N^4$ for a system with $N$ basis functions. But, only non-zero integrals are important in real calculations. In this connection, in \cite{Dyczmons_TCA_1973_28_307} it was argued that the total number of non-zero two-electron integrals ($N_{2e}$) scales as $N^2(lnN)^2$. Later, without a proof, it was noted \cite{Ahlrics_TCA_1974_33_157} that $N_{2e}$ scales as $N^2lnN$. An estimation that $N_{2e}$ scales as $N^{2.2-2.3}$ for integrals with absolute values greater $10^{-10}$, which has been reported in \cite{StroutSciseria_JCP_1995_102_8448}, is now widely accepted. 

On the other hand, the small and large two-electron integrals are used differently in the modern direct algorithm of the Fock matrix calculations \cite{AFK_JCC_1982_3_385}. Only integrals, whose contributions to the Fock matrix are greater than predefined threshold, are calculated and used in calculations of the Fock matrix. This means that the integrals, which are small in absolute values, are calculated really, while those one, which absolute values are large, calculated often. Therefore, the scaling properties of the total numbers of non-zero small and large two-electron integrals have to be estimated separately first before investigation of the scaling property of the Fock matrix calculation on the number of basis functions. 

Possible scaling property the number of non-zero two-electron integrals can be qualitatively understand by considering a classification of the two-electron integrals on four classes in accordance with the number of centers together with estimations of numbers of integrals in these classes presented in Table \ref{Table_1}. 
\begin{table} [h]
	\caption{Four classes of two-electron integrals, number of integrals in each class, and asymptotic contributions of the integrals to the number of non-zero integrals.}
	\begin{ruledtabular}
		\begin{tabular}{@{}cccc}
			Int. class & Numb. of int. & Asympt. contr. to non-zero int. \\
			\colrule
			1c int.  &   $\sim N$    &         Yes            \\
			2c int.  &   $\sim N^2$  &         No             \\
			3c int.  &   $\sim N^3$  &         No             \\
			4c int.  &   $\sim N^4$  &         No             \\
		\end{tabular}
	\end{ruledtabular}
	\label{Table_1}
\end{table}
From this classification follows that $N_{2e}$ can only have one of the following possible asymptotic dependencies on $N$: the first, the second, the third, and the fourth power, because this dependence is defined by the asymptotic property of integrals, belonging to the corresponding classes, with respect to increasing the distances between their centers. Then, the results presented in work \cite{StroutSciseria_JCP_1995_102_8448} show that at least the four-center integrals give no contribution to the total number of non-zero integrals. Otherwise, any non-zero part of four-center integrals has to give the fourth power dependence of the number of non-zero integrals due to highest power dependence of this class in comparison with others. The complete elimination of these integrals arises because an asymptotic value of any four-center two-electron integral with respect to increasing the distances between their centers is equal to zero. (Detailed consideration of asymptotic properties of the two-electron integrals is given below.) The asymptotic property of the two- and three-center integrals with respect to the distances between the centers is similar to that of the four-center integrals. Therefore, they also must have no influence on the asymptotic property of the number of non-zero two-electron integrals. The only what they can do is shifting the linear scaling behavior of the number of non-zero two-electron integrals towards to larger molecular systems. The one-center two-electron integrals do not depend on the distances between the centers. Therefore, their values are constants and, hence, they give asymptotically linear scaling dependence of the total number of non-zero two-electron integrals.

For this reason, the fractional asymptotic values of the power dependence of the number of non-zero two-electron integrals from the number of basis functions given in \cite{StroutSciseria_JCP_1995_102_8448} obviously point out that the correct asymptotic values of these quantities have not been reached. 

Now, consider the two-electron integral
$$
(kl\vert mn)=\int\phi^a_k(1)\phi^b_l(1)\frac{1}{r_{12}}\phi^c_m(2)\phi^d_n(2)d\tau_1\tau_2
$$
where $\phi^a_k(1)$ is the $k$-th basis function located at the $a$ nucleus of the first electron and $r_{12}$ is the distance between two electrons. To prove the mathematical statement about power dependence of $N_{2e}$ on $N$ and, hence, on the size of a system, the two-electron integrals must be preliminary classified on the numbers of centers, and the dependencies of the values of these integrals on the distances between their centers in each class must be considered. After that, the theorem on power dependence of $N_{2e}$ on $N$ can be proven. 

Let us define a model system which is formed by the same atoms uniformly distributed in three-dimensional space with one $s$ basis function per each nucleus and which will be further used in the present consideration. 

Altogether, there are four types of two-electron integrals: one-, two-, three-, and four-center integrals.

{\it \bf One-center two-electron integrals}. These integrals do not depend on the distances between centers because all basis functions are located on the same center. Therefore, the values of these integrals are constants and are defined by the parameters of basis functions. There are only $N$ such integrals for the model system under consideration.  

{\it \bf Two-center two-electron integrals}. There are two sub-types of these integrals. In the first one 
$$
\int\phi^a_k(1)\phi^a_k(1)\frac{1}{r_{12}}\phi^b_l(2)\phi^b_l(2)d\tau_1\tau_2
$$
the basis functions of each electron are located on the same center. Therefore, this integral is the two-electron integral for two density distributions. It is clear that the value of such an integral decreases with increasing the distance between the two centers. Probably, it is not possible derive an explicit formula for the value of such an integral. Though, for the distances between two centers which are larger than the sum of typical sizes of these distributions, an asymptotic formula for the value of this integral can be derived if it is noted that the integral in this case is a Coulomb potential energy between two density distributions. In Electrostatics it is well known that for distances larger than the radius of the charge density distribution, the distribution can be replaced by a corresponding point charge located in the center of the density distribution \cite{Pursell_1984}. For Gaussian function, due to symmetry, such a center is a point where the function is located. The integral, in this case, is expressed by a simple formula
$$
\int\phi^a_k(1)\phi^a_k(1)\frac{1}{r_{12}}\phi^b_l(2)\phi^b_l(2)d\tau_1\tau_2\:\:\sim \:\:\frac{{N^a_k}^2 {N^b_l}^2}{R_{ab}},
$$
where $R_{ab}$ is the distance between the centers $a$ and $b$ and $N^a_k$, $N^b_l$ are normalization factors of the corresponding basis functions. 

The values of this two-center two-electron integral for some distances between two centers and the inverse values of these distances for cases of two $s$ and two $p$ Gaussian functions with exponents equal to 0.5 are presented in Table \ref{Table_2}. The integrals were calculated using the Rys polynomial method \cite{KingDupuis_JCompPhys_1976_21_144}. An analysis of this data shows that the asymptotic formula gives reasonable estimations of two-center two-electron integral values, which are especially good for the case of $s$ functions. In this case, the asymptotic and numerical values are in excellent agreement for distances larger than 2.0 \AA. 

\begin{table*}
\caption{The values of two-center two-electron integrals for different distances between the centers.}
\begin{ruledtabular}
\begin{tabular}{@{}rccc}
 R(\r{A})&    $(s^as^a|s^bs^b)$  &$(p^a_x p^a_x|p^b_x p^b_x)$&    1/$R_{ab}$ (a.u.) \\
\colrule 
     0.6 &  6.554181989463915E-1  &  5.482973740079550E-1   &  8.819620820666667E-1  \\
     1.0 &  4.980644757727490E-1  &  4.322272962735285E-1   &  5.291772492400000E-1  \\
     2.0 &  2.645470381932461E-1  &  2.489481693180657E-1   &  2.645886246200000E-1  \\
     4.0 &  1.322943123099960E-1  &  1.300701033377158E-1   &  1.322943123100000E-1  \\
     6.0 &  8.819620820666761E-2  &  8.752217465970426E-2   &  8.819620820666667E-2  \\
    60.0 &  8.819620820666759E-3  &  8.818934900256092E-3   &  8.819620820666667E-3  \\
   100.0 &  5.291772492400056E-3  &  5.291624316992965E-3   &  5.291772492400000E-3  \\
   600.0 &  8.819620820666760E-4  &  8.819613960273960E-4   &  8.819620820666667E-4  \\
  6000.0 &  8.819620820666761E-5  &  8.819620752062707E-5   &  8.819620820666667E-5  \\
 10000.0 &  5.291772492400056E-5  &  5.291772477581577E-5   &  5.291772492400000E-5  \\
 60000.0 &  8.819620820666761E-6  &  8.819620819980713E-6   &  8.819620820666667E-6  \\
\end{tabular}
\end{ruledtabular}
\label{Table_2}
\end{table*}

In the second two-center two-electron integral  
$$
\int\phi^a_k(1)\phi^b_l(1)\frac{1}{r_{12}}\phi^a_k(2)\phi^b_l(2)d\tau_1\tau_2
$$ 
the basis functions of each electron are located on the different centers, however, the centers are the same for each pair. The use of the Gaussian product theorem \cite{Shavitt_MethCompPhys_1963_2_1} transforms this integral into the one-center two-electron integral considered above with a prefactor which exponentially decreases with increase of the distance between the centers
$$
exp\left(-\frac{2\alpha\beta}{\alpha+\beta}R_{ab}^2\right) \int\phi^e_p(1)\frac{1}{r_{12}}\phi^e_p(2)d\tau_1\tau_2.
$$
Here, $\alpha$ and $\beta$ are exponents of Gaussian functions $\phi^a_k$ and $\phi^b_l$, and $\phi^e_p$ are Gaussian functions with exponent $\alpha+\beta$ located at a center $e$ between centers $a$ and $b$. 

Thus, the values of both types of two-center two-electron integrals decrease with increasing distances between the centers and, therefore, their asymptotic values are equal to zero in the limit of infinite distances between the centers.

{\it \bf Three-center two-electron integrals} 
$$
\int\phi^a_k(1)\phi^b_l(1)\frac{1}{r_{12}}\phi^c_m(2)\phi^c_m(2)d\tau_1\tau_2. 
$$
The Gaussian product theorem can be applied, in this case, to a couple of functions of the same electron located on different centers. In such a way, the tree-center two-electron integral can be transformed into a two-center two-electron integral with an exponential prefactor depending on the distance between the centers
$$
exp\left(-\frac{\alpha\beta}{\alpha+\beta}R_{ab}^2\right) 
\int\phi^e_p(1)\frac{1}{r_{12}}\phi^c_m(2)\phi^c_m(2)d\tau_1\tau_2 .
$$
The two-center two-electron integral itself is exactly the integral which has been considered above. The value of this integral depends on the distance between $e$ and $c$ centers and decreases with its increase. The value of this integral also exponentially decreases with increasing the distance between centers $a$ and $b$. Hence, the asymptotic value of this integral is equal to zero.

{\it \bf Four-center two-electron integrals}. The double application of the Gaussian product theorem transforms this integral into a two-center two-electron integral with two prefactors exponentially depend on distances between the centers
\begin{equation}
\begin{split}
exp\left(-\frac{\alpha\beta}{\alpha+\beta}R_{ab}^2\right) 
exp\left(-\frac{\gamma\delta}{\gamma+\delta}R_{cd}^2\right)  \nonumber \\
\times \int\phi^e_p(1)\frac{1}{r_{12}}\phi^f_q(2)d\tau_1\tau_2 .
\end{split}
\end{equation}
All three multipliers in this expression decrease with increasing distances between the centers. Therefore, the asymptotic value of this integral is zero.

Thus, only the one-center two-electron integrals do not depend on geometry and the number of these integrals is proportionally to $N$. The values of all other integrals depend on the distances between centers and their asymptotic values are equal to zero. Therefore, in large molecules, where distances between the majority of atoms or the centers of two-electron integrals are large, the values of these integrals will be negligibly small and, hence, they will have no influence on the asymptotic property of the number of non-zero integrals. Thus, the following Theorem can be formulated: 

\begin{theorem}
The number of two-electron integrals $N_{2e}(\epsilon)$, whose absolute values greater than $\epsilon$, linearly scales on $N$ for large molecular systems.
\end{theorem}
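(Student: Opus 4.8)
The strategy is to reduce the counting of non-zero integrals to a geometric covering argument, using the asymptotic vanishing of every integral class except the one-center class, which was established above. I would begin by fixing a threshold $\epsilon>0$ and the model system of $N$ identical atoms with one $s$-function each, distributed with bounded density in $\mathbb{R}^3$ (i.e.\ the minimum internuclear separation is bounded below and the number of atoms in any ball of radius $R$ grows like $R^3$). The first step is to bound the number of non-zero \emph{two-center} integrals of the first sub-type $(s^as^a\vert s^bs^b)$: by the asymptotic estimate $(s^as^a\vert s^bs^b)\sim N_k^a{}^2 N_l^b{}^2 / R_{ab}$ derived above, this integral exceeds $\epsilon$ only when $R_{ab} \lesssim 1/\epsilon$ (times a basis-set constant). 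Hence for each center $a$, the partners $b$ are confined to a ball of fixed radius $O(1/\epsilon)$, which by the bounded-density assumption contains at most $O(1/\epsilon^3)$ centers — a constant independent of $N$. So the number of surviving integrals of this sub-type is $O(N/\epsilon^3)$, i.e.\ $O(N)$.

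The second step handles the remaining classes uniformly. For the second two-center sub-type, the three-center integrals, and the four-center integrals, the displayed Gaussian-product-theorem reductions above each exhibit the integral as a product of (i) a genuinely two-center integral of the first sub-type and (ii) one or two Gaussian prefactors $\exp(-\mu R_{ab}^2)$, $\exp(-\nu R_{cd}^2)$ with positive exponents fixed by the basis set. The prefactor forces $R_{ab}$ (and $R_{cd}$) to be at most $O(\sqrt{\log(1/\epsilon)})$ for the integral to exceed $\epsilon$, and the residual two-center factor forces the distance between the two surviving effective centers $e,f$ to be $O(1/\epsilon)$. I would then count: choose center $a$ in $N$ ways; center $b$ within distance $O(\sqrt{\log(1/\epsilon)})$ of $a$ in $O((\log(1/\epsilon))^{3/2})$ ways; similarly $c$ and $d$ once the effective center $e$ is pinned to within $O(1/\epsilon)$ of $f$. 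Multiplying these fixed (i.e.\ $N$-independent) bounds gives again $O(N)$ integrals, with an $\epsilon$-dependent but $N$-independent constant. Adding the trivial $O(N)$ count for one-center integrals completes the upper bound; the lower bound is immediate since the $N$ one-center integrals are nonzero constants once $\epsilon$ is below their (fixed) magnitude. Thus $c_1(\epsilon)\,N \le N_{2e}(\epsilon)\le c_2(\epsilon)\,N$ for $N$ large, which is the claim.

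The step I expect to be the main obstacle — and the one requiring the most care — is making the geometric covering argument rigorous and \emph{uniform in $N$}. Specifically, one must pin down precisely what ``uniformly distributed atoms'' means and verify that this hypothesis genuinely yields a bounded number of centers in every ball of fixed radius; without a packing/separation assumption the conclusion can fail (e.g.\ atoms accumulating at a point). A secondary technical point is that the asymptotic estimates above are stated for distances beyond the sum of the charge-distribution radii, so one must separately absorb the finitely-many ``short-range'' configurations around each center into the constant — again using bounded density to ensure their number per center is $N$-independent. A fully careful treatment would also note that the argument extends verbatim from the $s$-only model system to general basis sets, since adding a fixed number $L$ of contracted functions per atom only inflates all counts by the constant factor $L^4$, and angular momentum changes only polynomial prefactors in the asymptotic formulas, not the exponential decay that drives the covering bound.
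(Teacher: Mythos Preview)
Your argument is correct and rests on the same localization idea as the paper: once $\epsilon>0$ is fixed, the centers participating in a surviving integral must lie within an $N$-independent neighborhood of the first center, so the inner summations collapse to constants and only the outer sum over the first center contributes a factor of $N$. The paper carries this out in one stroke, asserting the existence of a single ball $S_a^k$ of radius $R_a^k(\epsilon)$ about center $a$ such that $b,c,d\in S_a^k$ if and only if $|(kl|mn)|\ge\epsilon$; it then observes that the triple sum over $l,m,n$ is bounded by the (constant) number of lattice points in $S_a^k$. Your route differs in that you decompose by integral class and exploit the two distinct decay mechanisms separately: the Gaussian prefactors confine $R_{ab},R_{cd}$ to $O(\sqrt{\log(1/\epsilon)})$, while the residual Coulomb factor confines $R_{ef}$ to $O(1/\epsilon)$. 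This buys you explicit $\epsilon$-dependence of the constant and, more importantly, a sharper bound, since in the paper's single-ball argument the radius must be taken as large as $O(1/\epsilon)$ to accommodate the slowly decaying $(s^as^a|s^bs^b)$ class, which inflates the constant for all classes simultaneously. You also make explicit two points the paper leaves implicit or imprecise: the bounded-density (packing) hypothesis, without which ``number of centers in a fixed ball is $O(1)$'' fails, and the matching lower bound from the $N$ one-center integrals. Finally, note that the paper's ``if and only if'' is not literally correct---the level set $\{(b,c,d):|(kl|mn)|=\epsilon\}$ is not a sphere---but only the direction ``$|(kl|mn)|\ge\epsilon\Rightarrow b,c,d\in S_a^k$'' is actually needed, and that direction is what both arguments establish.
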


\begin{proof}
Consider the number of two-electron integrals for the model system, which can be presented in the following way
\begin{eqnarray}
N_{2e}(\epsilon)=\sum^{N}_{k,l,m,n=1}I(klmn)=\sum^{N}_{k=1}\sum^{N}_{l,m,n=1}I(klmn) \;,
\label{l01}
\end{eqnarray}
where \mbox{$I(klmn)=1$}, when \mbox{$\vert(kl\vert mn)\vert\geq\epsilon$}, and \mbox{$I(klmn)=0$} otherwise; and indices $k,l,m,n$ are ordered in usual way \mbox{$k \ge l$}, \mbox{$m \ge n$}, and \mbox{$k > m$}; if \mbox{$k = m$} then \mbox{$l \ge n$}. Define a tree-dimension manifold $S_a^k$ of radius $R_a^k$, whose centrum is located on the $a$ nucleus. Then, for any small $\epsilon$ it is possible define such an $R_a^k$ that, when \mbox{$b, c, d \in S_a^k$}, then \mbox{$\vert(kl\vert mn)\vert\geq \epsilon$} and \mbox{$\vert(kl\vert mn)\vert<\epsilon$} otherwise. (The possibility of defining such $R_a^k$ follows from the consideration of two-electron integrals given above, where it was shown that the values of all two-, three-, and four-center two-electron integrals asymptotically decrease to zero with increasing the distances between the centers.) Now, we can see that summation for fixed $k$ in the last triple sum in (\ref{l01}) have to be done not over the full range of indices from 1 to $N$, but only for parts of them, which are defined by manifolds $\{S_a^k\}_{k=1}^N$. The radius $R_a^k$ of manifold $S_a^k$ only depends on $\epsilon$, but not on $N$. Therefore, the triple summation for fixed $k$ gives a constant which is defines by $\epsilon$, but does not depend $N$. Summation over all indices, from 1 to $N$, occurs only when $\epsilon=0$ and, hence, $R_a^k=\infty$. Thus, the quadruple summation in (\ref{l01}) reduces to a single summation of some constants depending on $\epsilon$. This shows that the number of non-zero two-electron integrals scales linearly with respect to a number of basis functions when $\{R_a^k\}_{k=1}^N$ are small in comparison to the typical size of a large system. In the opposite case, a deviation from linear scaling will be observed for $N_{2e}(\epsilon)$. 
\end{proof} 

Before considering the results of numerical examples, it needs to be noted that the values of two-center two-electron integrals of the first type only reduce as $1/R$. Therefore, these integrals can be excluded from calculations only at extremely large distances between the centers. However, a number of these integrals formally scales only as $N^2$ in comparison with dependence $N^4$ for a total number of two-electron integrals. Hence, a contribution of these integrals to the total number of two-center integrals decreases with increase of the number of basis functions. For example, in the largest calculations presented below, with 31409 basis functions, the formal weight of these integrals in the total number of two-electron integrals is about of $1.0*10^{-9}$. This small contribution permits to conclude that the influence of these integrals on the scaling property of the total number of non-zero integrals will be practically invisible, even when taking into account that the weight of these integrals in the total number of non-zero integrals can increase with $N$ due to slower asymptotic convergence of them in comparison with other integrals.

It can be also noted that due to the formal $N^2$ dependency all these integrals are calculated during a time comparable with that one for a calculation of one-electron integrals. Therefore, a calculation of these integrals has no significant influence on the total time of two-electron integral calculation.

The numerical investigations of differential properties of $N_{2e}(\epsilon)$ require that the size of molecular systems increases with a constant increment, because only in this case the finite difference derivatives of $N_{2e}(\epsilon)$ evaluated at different $N$ can be correctly compared. This requirement can be realized only in one-, two-, and quasi three-dimensional molecular systems. The variations of $\epsilon$ or $R$ in numerical experiments can be easy simulated by counting $N_{2e}(\epsilon)$ for different values of two-electron integral cutoff thresholds.

In the present study, the $N_{2e}(\epsilon)$ numbers were counted for different two-electron integral cutoff thresholds for three systems. The first one was a linear chain of lithium atoms, where the number of atoms are varied from 10 to 1500. The second case was a two-dimensional $40\times40$ cell of lithium atoms, while the last one was alanine polymers, where the number of alanine blocks varied from 10 to 790 and the number of atoms varied from 112 to 7912. The internuclear distances in the first two examples correspond to the distances in the solid lithium. The geometries of alanine polymers were optimized by the Tinker program \cite{RenPonder_JPCB_2003_107_5933}. The STO-3G basis \cite{HSP_JCP_1969_51_2657} was employed in the calculations of the first two systems and the 6-31G basis \cite{HDP_JCP_1972_56_2257} was used in the calculations of the alanine polymers. The number of contracted Gaussian type basis functions (CGTF) in the calculations of these systems varied from 50 to 7500, from 200 to 8000, and from 609 to 43509 correspondingly. The calculations have been performed using the program \cite{Mitin_JCC_1998_19_1877} with an improved integral code.

The dependencies of $N_{2e}(\epsilon)$ on the number of basis functions for the different cutoff thresholds obtained for the first two systems in these calculations are presented on Fig. \ref{Figure_01_02}.
\begin{figure}[ht]
	\centering
	\subfloat[]{\label{fig_01} 
		\includegraphics[scale=0.63]{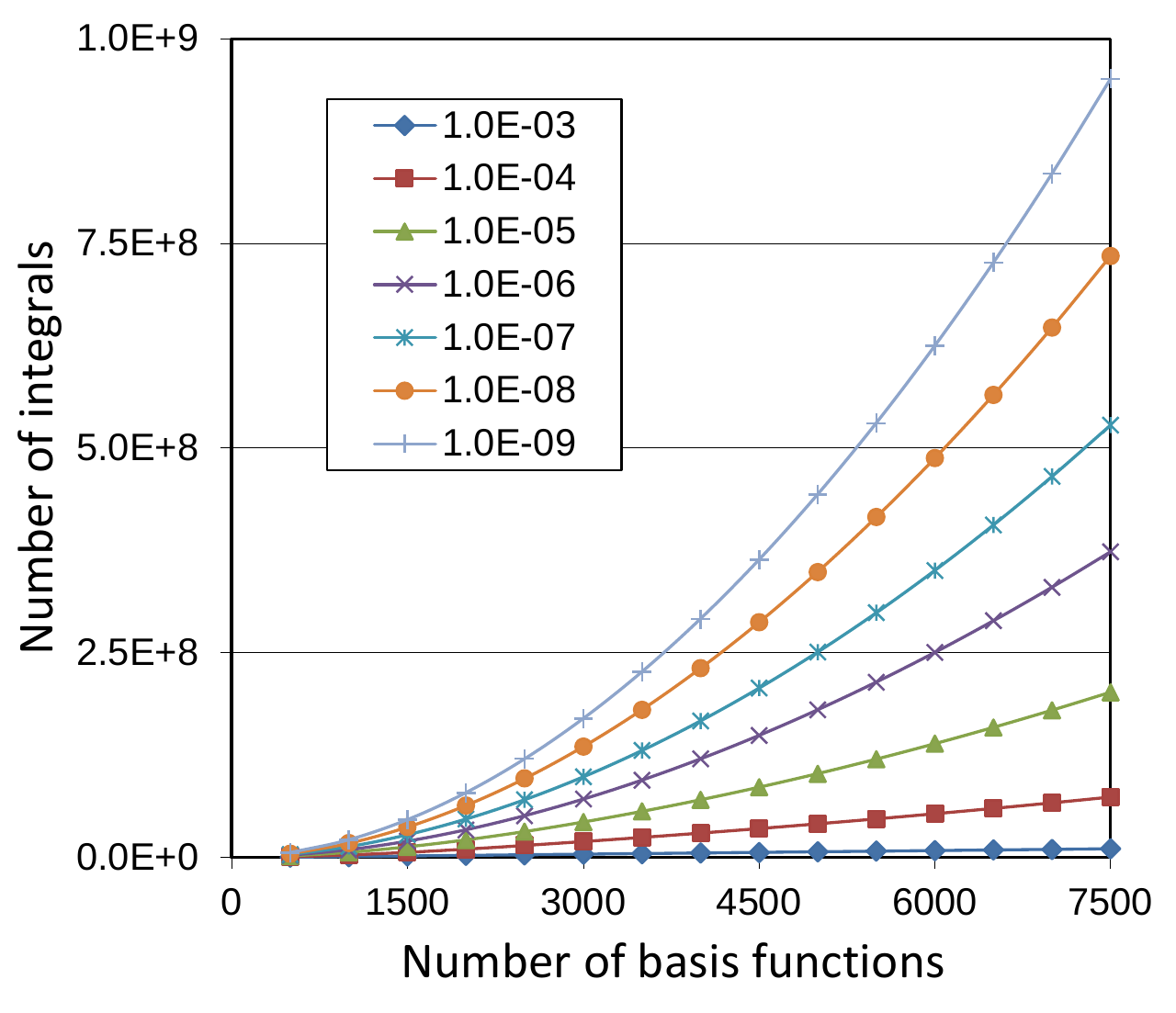}
	} 
    \\
	\subfloat[]{\label{fig_02}
		\includegraphics[scale=0.63]{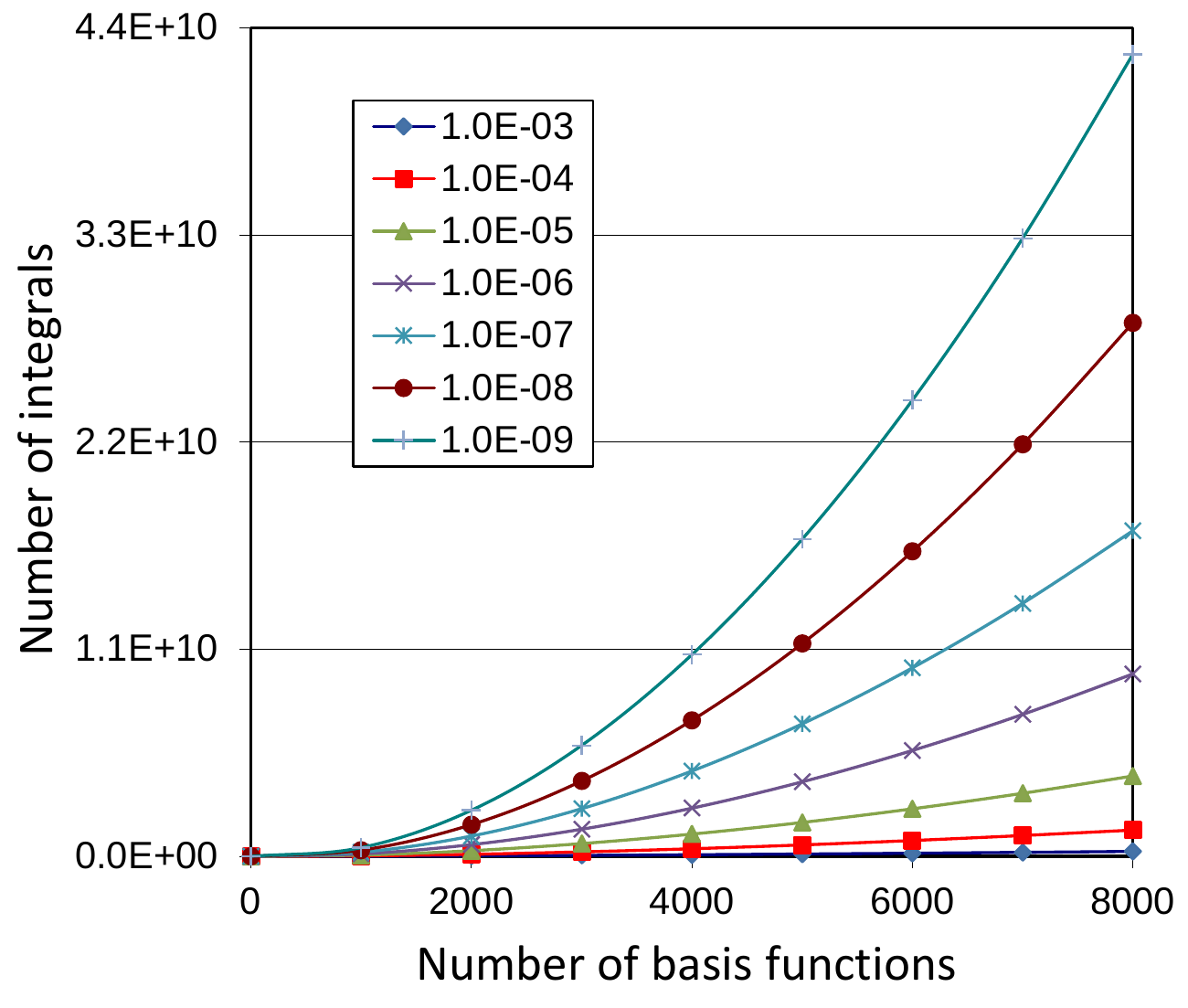}
	} 
	\caption{The dependence of a number of non-zero two-electron integrals on a number of basis functions in calculations of a linear chain of Li atoms (a) and two-dimensional $40\times40$ cell of Li atoms (b) with STO-3G basis for different cutoff thresholds.}
	\label{Figure_01_02}
\end{figure}
The dependencies of $N_{2e}(\epsilon)$ from $N$ on these figures are quite similar for both molecular systems. In the third case the dependence of $N_{2e}(\epsilon)$ on $N$ for low precision of integrals has been investigated in detail because the dependence of these integrals on $N$ defines the scaling property of the Fock matrix calculation method with the use of density or density difference prescreening. Indeed, in modern programs the two-electron integrals are calculated when an estimation of maximal integral $max(ij|kl)$ in a block of shell integrals $|max(ij|kl)\times max\Delta D_{kl}|>\epsilon$, where $max\Delta D_{kl}$ is the largest matrix element of the density matrix difference. The values of $\epsilon$ usually equals between $10^{-6}$ and $10^{-7}$ in calculations of large molecular systems \cite{ChallacombeSchwegler_JCP_1997_106_5526}. For such systems, the convergence criterion on density matrix difference has the same value. Therefore, large non-zero two-electron integrals defines the scaling property of the Fock matrix calculations.

\begin{figure}[ht]
	\centering
	\subfloat[]{\label{fig_03} 
		\includegraphics[scale=0.67]{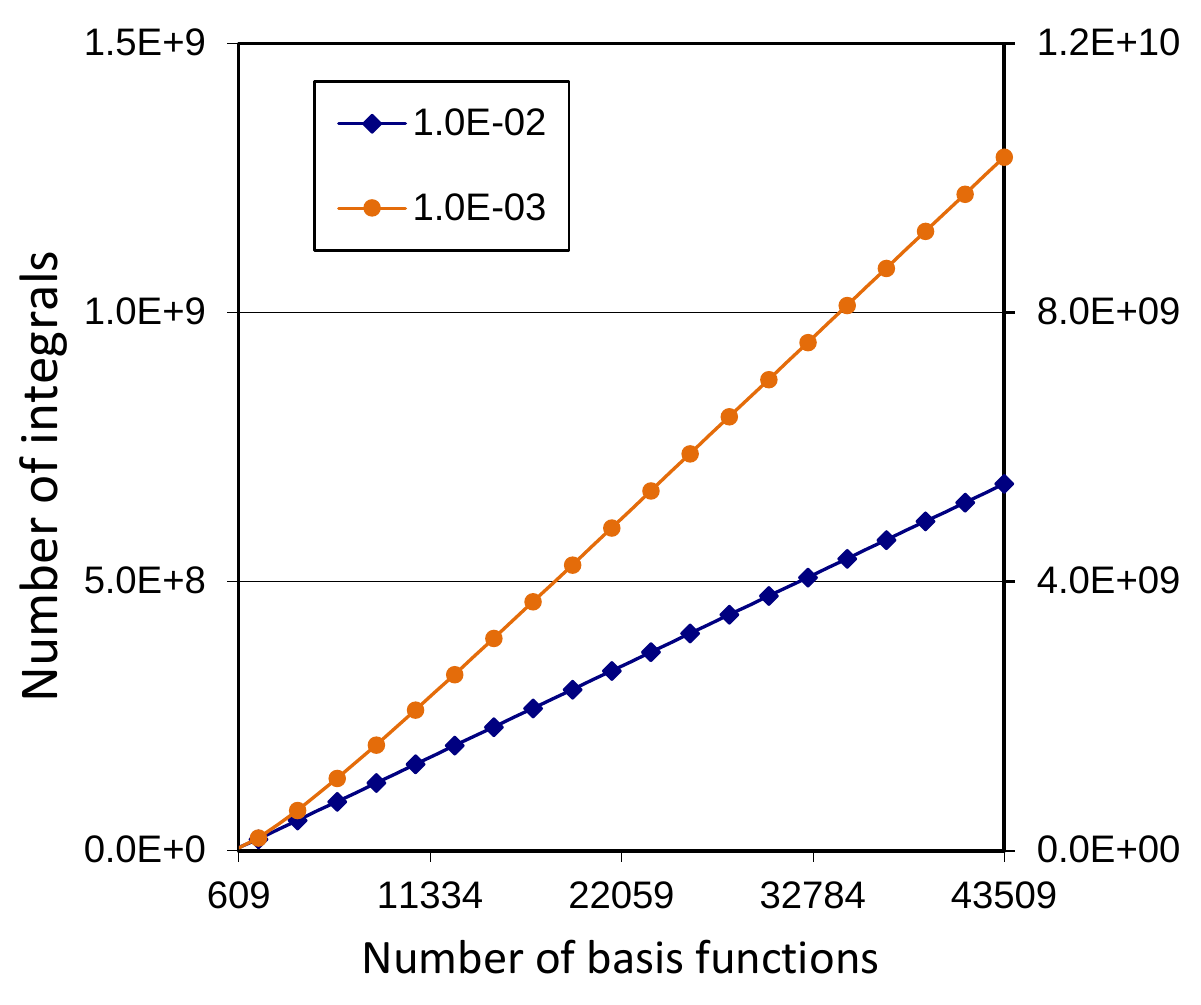}
	}
    \\ 
	\subfloat[]{\label{fig_04}
		\includegraphics[scale=0.67]{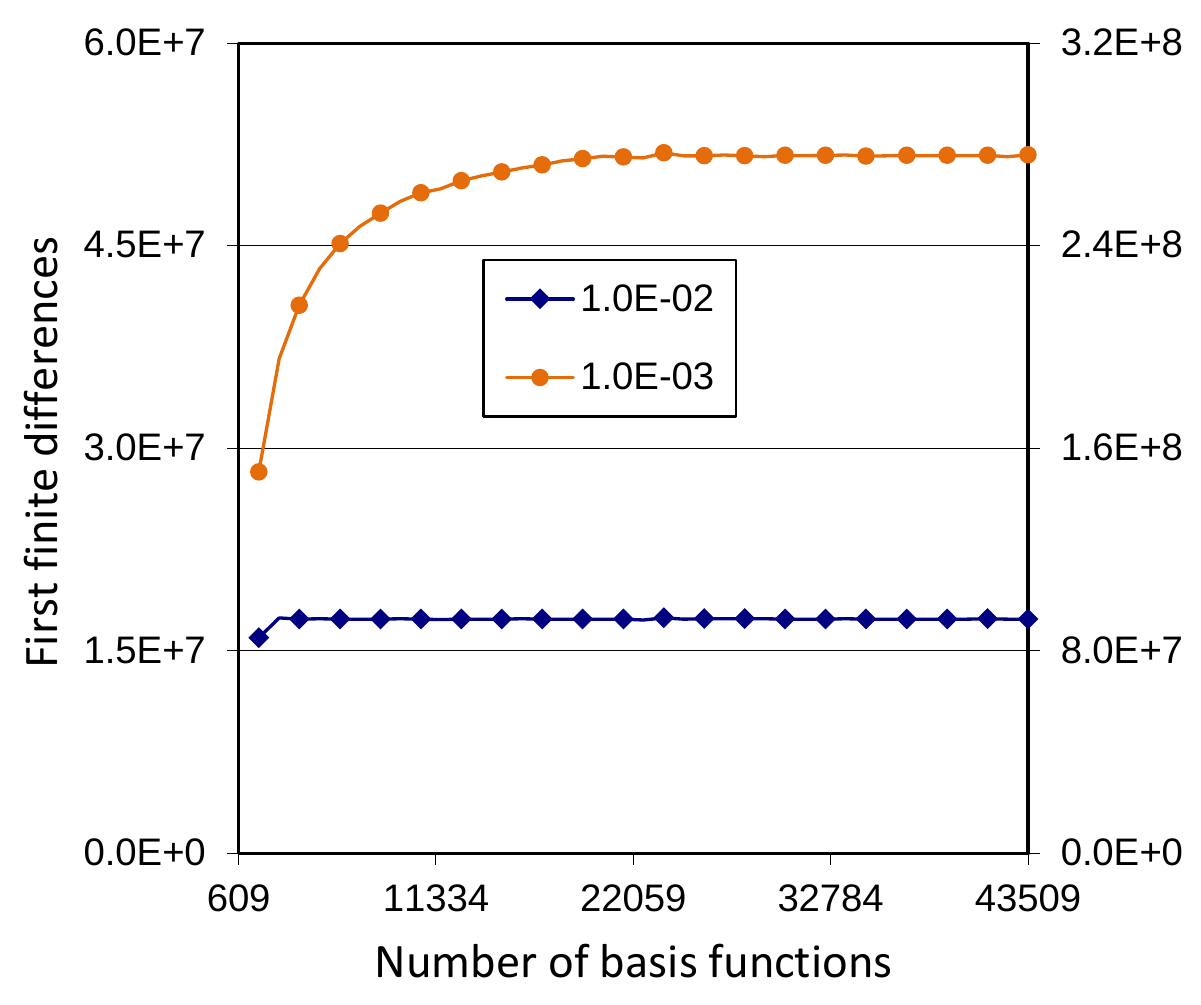}
	} 
	\caption{Dependence of $N_{2e}(\epsilon)$ (a) and first finite differences (b) on $N$ in calculations of alanine polymers with 6-31G basis for integral precision $10^{-2}$ and $10^{-3}$. The left and the right vertical scales correspond to precision $10^{-2}$ and $10^{-3}$.}
	\label{Figure_03_04}
\end{figure}

The dependencies $N_{2e}(\epsilon)$ and its first finite differences on $N$ are presented on different figures due to significant deviations in number of non-zero integrals. These dependencies for $\epsilon$ equals to $10^{-2}$ and $10^{-3}$ are presented on Fig. \ref{fig_03} and Fig. \ref{fig_04} correspondingly.

Similar dependencies for integral cutoff precision of $5\times10^{-4}$, $3\times10^{-4}$, $2\times10^{-4}$, $1.5\times10^{-4}$ and $10^{-4}$ are presented on Fig. \ref{fig_05} and Fig. \ref{fig_06}, while dependencies for precision of $10^{-5}$, $10^{-6}$ and $10^{-7}$, $10^{-8}$ are given on Fig. \ref{Figure_07_08} and Fig. \ref{Figure_09_10}, correspondingly.
\begin{figure}[h]
	\centering
	\subfloat[]{\label{fig_05} 
		\includegraphics[scale=0.67]{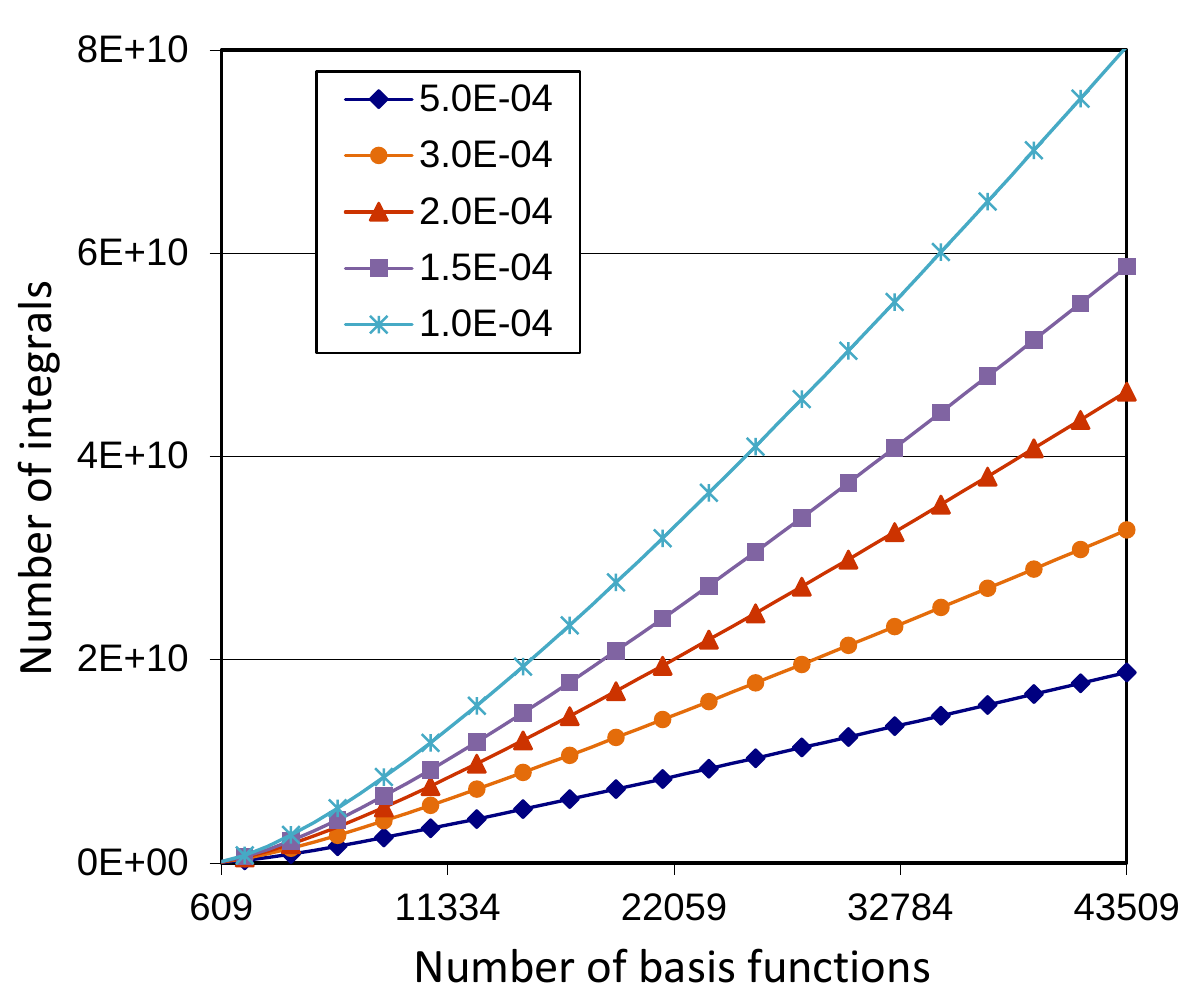}
	}
    \\ 
	\subfloat[]{\label{fig_06}
		\includegraphics[scale=0.67]{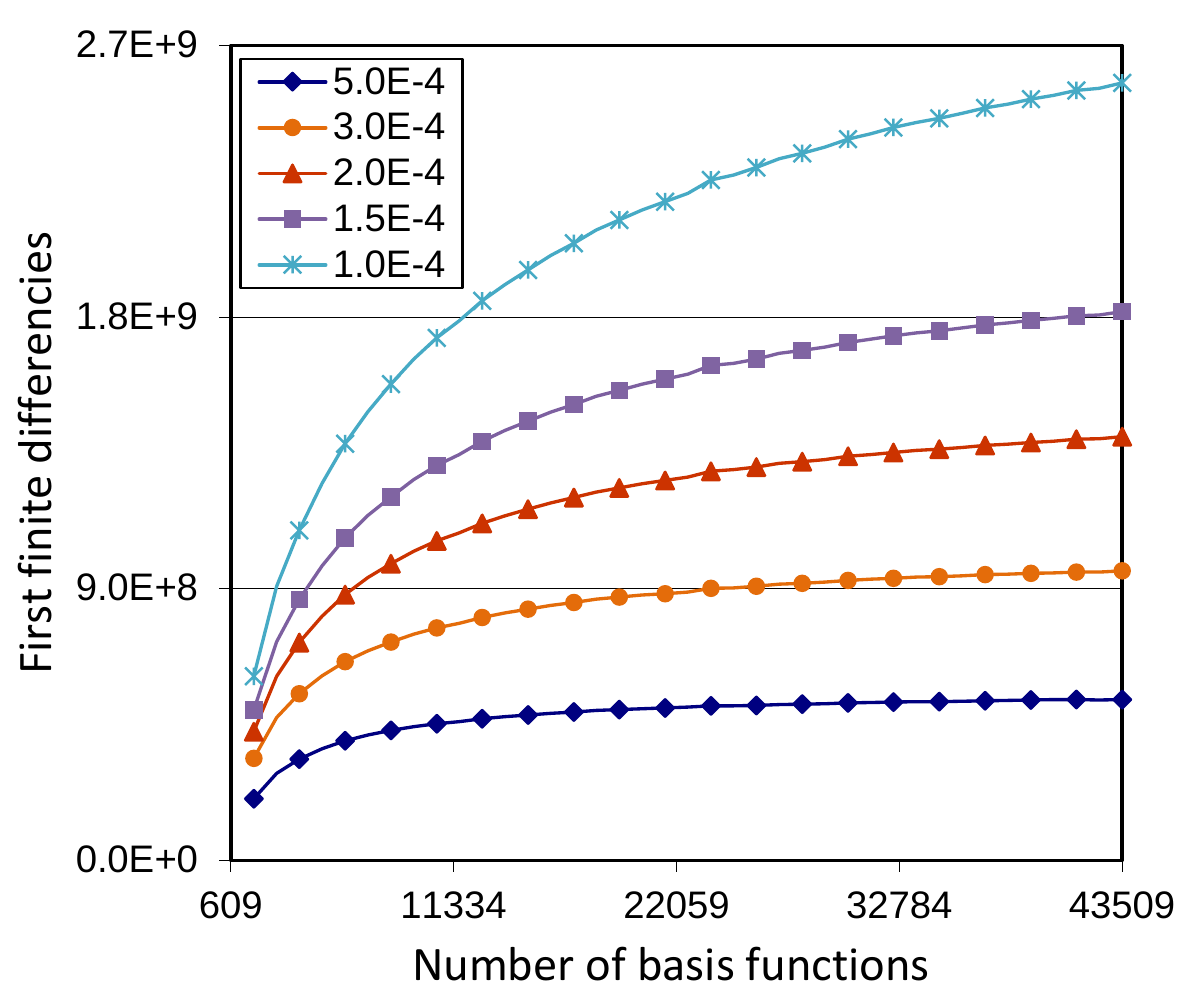}
	} 
	\caption{Dependence of $N_{2e}(\epsilon)$ (a) and first finite differences (b) on $N$ in calculations of alanine polymers with 6-31G basis for integral precision $5\times10^{-4}$, $3\times10^{-4}$, $2\times10^{-4}$, $1.5\times10^{-4}$ and $10^{-4}$.}
	\label{Figure_05_06}
\end{figure}
\begin{figure}
	\centering
	\subfloat[]{\label{fig_07} 
		\includegraphics[scale=0.67]{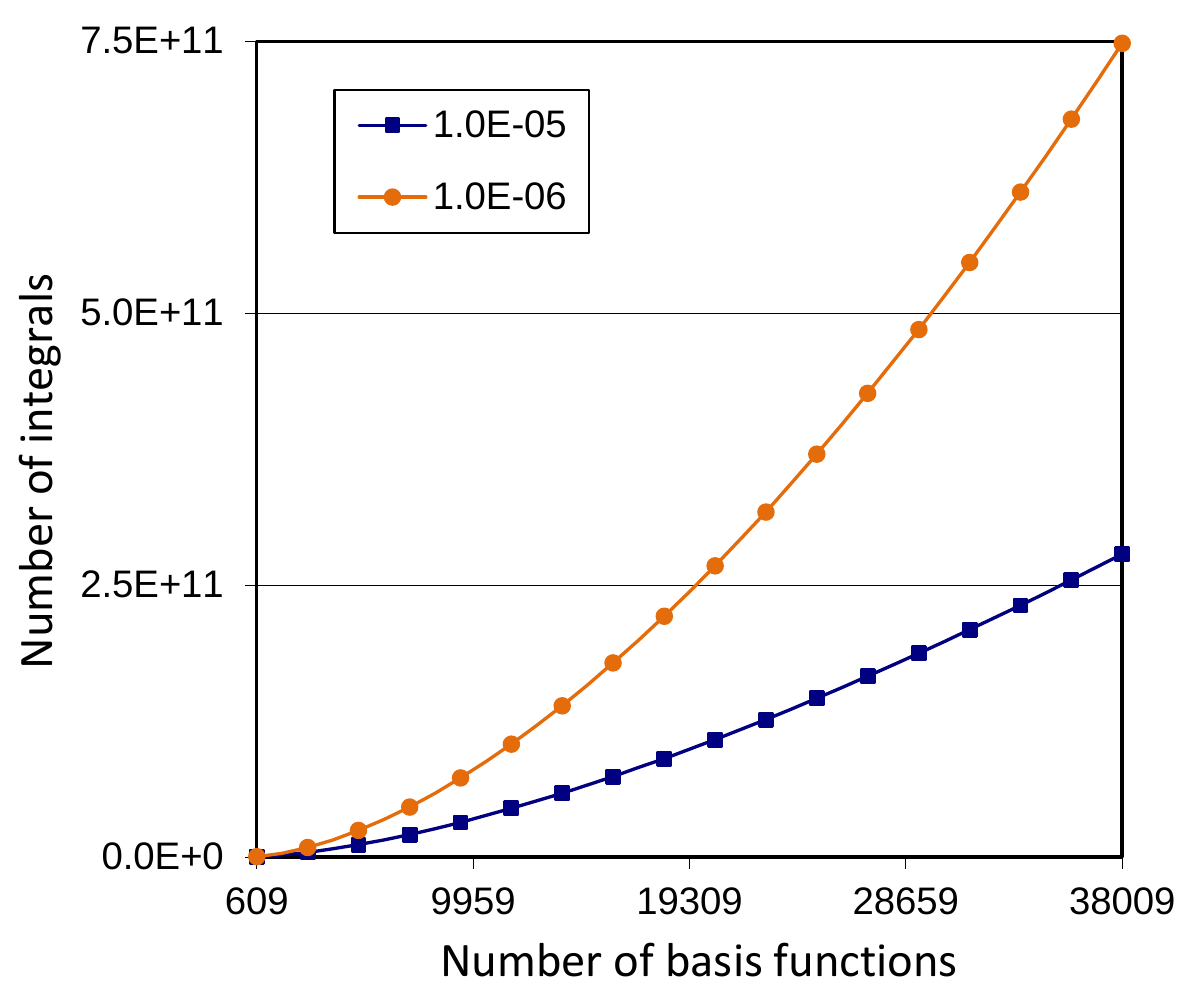}
	} 
    \\
	\subfloat[]{\label{fig_08}
		\includegraphics[scale=0.67]{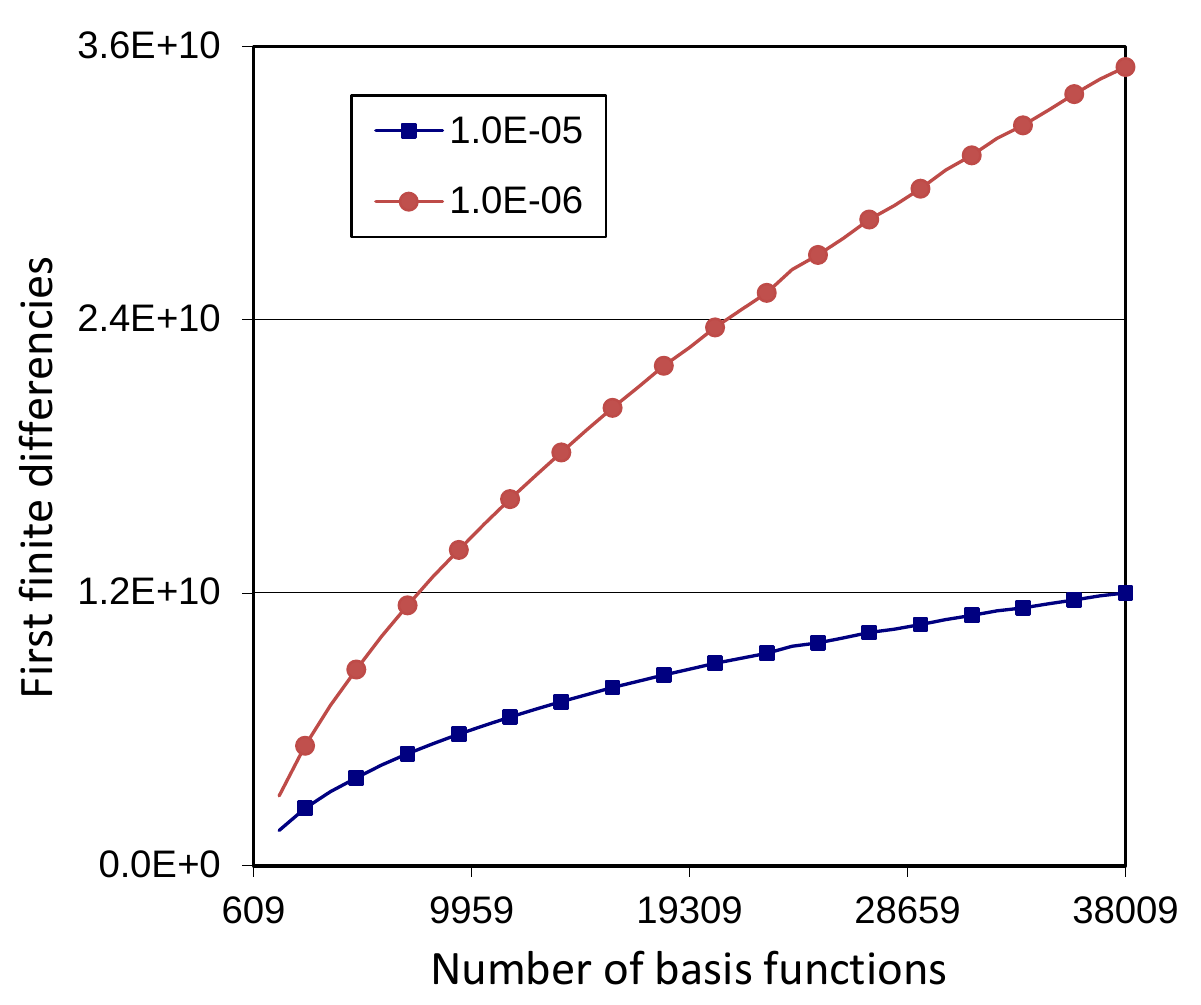}
	} 
	\caption{Dependence of $N_{2e}(\epsilon)$ (a) and first finite differences (b) on $N$ in calculations of alanine polymers with 6-31G basis for integral precision $10^{-5}$ and $10^{-6}$.}
	\label{Figure_07_08}
\end{figure}
\begin{figure}
	\centering
	\subfloat[]{\label{fig_09} 
		\includegraphics[scale=0.67]{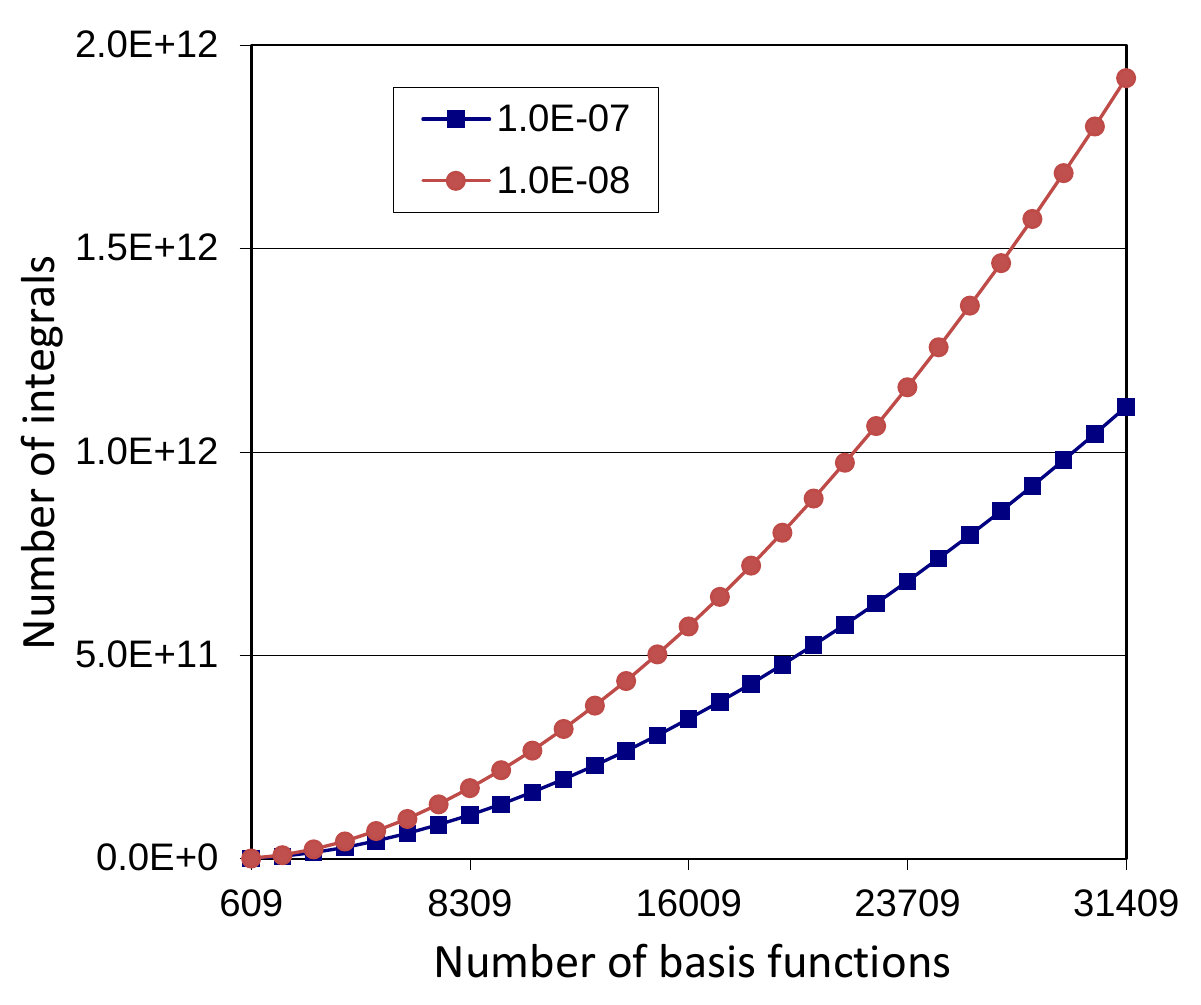}
	} 
    \\
	\subfloat[]{\label{fig_10}
		\includegraphics[scale=0.67]{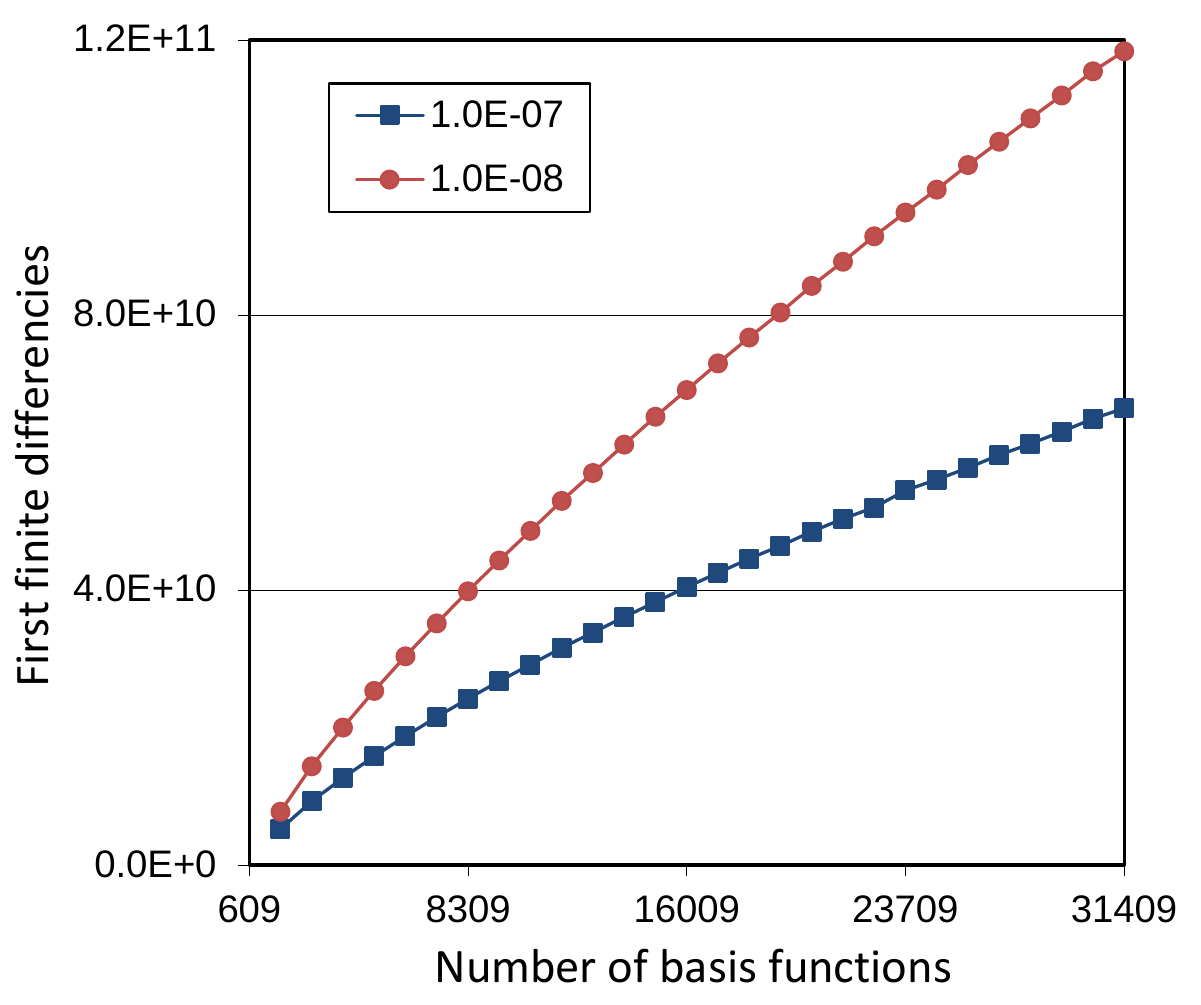}
	} 
	\caption{Dependence of $N_{2e}(\epsilon)$ (a) and first finite differences (b) on $N$ in calculations of alanine polymers with 6-31G basis for integral precision $10^{-7}$ and $10^{-8}$.}
	\label{Figure_09_10}
\end{figure}

An analysis of these results shows that, as expected, the $N_{2e}(\epsilon)$ scales linearly on $N$ for the low cutoff thresholds almost for all $N$, while for higher thresholds the dependencies are still higher than linear. In particular, the $N_{2e}(\epsilon)$ scales linearly for any number of basis functions for integrals with absolute values larger than $10^{-2}$. For integrals with absolute values larger than $10^{-3}$ the $N_{2e}(\epsilon)$ scales linearly for more than 20000 basis functions (see Fig. \ref{Figure_03_04}). The dependencies presented on Fig. \ref{fig_05} and Fig. \ref{fig_06} show that, from the practical point of view, $N_{2e}(\epsilon)$ scales almost linearly for integral precision $2.0\times10^{-4}$ and lower. For higher precision (Fig. \ref{Figure_07_08} and Fig. \ref{Figure_09_10}) the  linear dependence of the number of non-zero two-electron integrals is not observed in the presented calculations. However, the convex form of the dependence of $N_{2e}(\epsilon)$ on $N$ and the concave form of the dependence of the first finite difference derivatives on $N$ points that the power dependence of $N_{2e}(\epsilon)$ on $N$ is lower than the second order. 

Thus, the results of the numerical investigations are in agreement with the theorem which says that a number of non-zero two-electron integrals scales asymptotically linearly with respect to a number of basis functions for large systems.

\section{Linear scaling Fock matrix calculations with density prescreening} \label{Section_3}

There are two possible algorithms for the calculation of the two-electron part of Fock matrix with density prescreening for an evaluation of the contributions of non-zero two-electron integrals. The first one is based on using the total density in a prescreening procedure, while in the second algorithm the Fock matrix is formed from the calculated increments corresponding to the density differences \cite{AFK_JCC_1982_3_385}. In the simplest form, the algorithm with total density prescreening can be presented as follows

\vspace{2.5mm}
\begin{tabular}{lp{6.5cm}} 
	\multicolumn{2}{c}{\bf Algorithm I with total density prescreening} \\
	\textit{Step 1.} & Loop over iterations                    \\
	\textit{Step 2.} & Loops over basis function shells        \\
	\textit{Step 3.} & Estimate maxima of $|(ij|kl)|$ and $|D_{kl}|$ shell blocks \\
	\textit{Step 4.} & If $max|(ij|kl)*D_{kl}|>\epsilon$ calculate \\
	& $F=F+D(2J-K)$ \\
	\textit{Step 5.} & End of loops over basis function shells  \\
	\textit{Step 6.} & Check convergence and, if needed, go to \textit{Step 1.}
\end{tabular}
\vspace{2.5mm}

\noindent{}and the algorithm using density difference prescreening has the following form 

\vspace{2.5mm}
\begin{tabular}{lp{6.5cm}} 
	\multicolumn{2}{c}{\bf Algorithm II with density difference } \\
	\multicolumn{2}{c}{\bf prescreening} \\
	{\it Step 1.} & First iteration. Calculate $F^1=D(2J-K)$ \\ 
	{\it Step 2.} & Loop over iterations. Calculate        \\
	& $\Delta D^m=D^m-D^{m-1}$                 \\
	{\it Step 3.} & Loops over basis function shells        \\
	{\it Step 4.} & Estimate maxima of $|(ij|kl)|$ and $|D_{kl}|$ shell blocks \\
	{\it Step 5.} & If $max|(ij|kl)*\Delta D_{kl}|>\epsilon$ calculate   \\
	& $\Delta F^m=\Delta F^m+\Delta D^m(2J-K)$ \\
	{\it Step 6.} & End of loops over basis function shells  \\
	{\it Step 7.} & Calculate $F^m=F^1+\sum_{i=2}^m \Delta F^m$  \\
	{\it Step 8.} & Check convergence and, if needed, go to {\it Step 2.}
\end{tabular}
\vspace{2.5mm}

The schemes of both algorithms explicitly present only the main steps, which are needed for the selection of two-electron integrals by using prescreening techniques and calculations of Fock and delta Fock matrices. The fast multipole type methods \cite{WhiteHeadGordon_JCP_1994_101_6593, SSF_Science_1996_271_51}, the quantum chemical tree code \cite{CSA_JCP_1996_104_4685}, and the conventional self-consistent-field (SCF) methods with density and density difference prescreenning \cite{MBWP_JCC_2003_24_154} are described by these algorithms. The fast multipole methods and the quantum chemical tree code were realized as direct SCF methods proposed in \cite{Duke_CPL_1974_28_437} and then reintroduced in \cite{AFK_JCC_1982_3_385} together with other improvements.

The {\it Step 4} of Algorithm I and {\it Step 5} of Algorithm II for the Fock and delta Fock matrix calculations are similar and are the most computationally demanding steps of the algorithms. Therefore, they define the scaling property of the Fock and delta Fock matrix calculations with respect to the problem size.  

The scaling property of these steps can be revealed by considering distributions of non-zero two-electron integrals and leading density matrix elements on absolute values in dependence on the problem size. The results presented above show that the non-zero two-electron integrals with large absolute values scales linearly on $N$, although the small integrals scale higher than linearly, however, below quadratic. This means that non-zero two-electron integrals can be considered as a distribution where the largest two-electron integrals or the core part of this distribution scale linearly with respect to the number of basis functions, while the remaining large part of the distribution scales higher than linear. It is also clear that the fractions of the largest two-electron integrals in such a distribution decrease when the number of basis functions increase.

\begin{table} [ht]
	\caption{Dependence of distributions of non-zero two-electron integrals on absolute values in dependence on molecular system size. Two-electron integrals are calculated for alanine polymers $CCOH_3-(NCCOH_2CH_3)_K-NHCH_3$ with 6-31G basis. Normalized fractions of $N_{2e}(\epsilon)$ are given in \%.}
	\label{Table_3}
	\vspace{-1.3mm}
	\begin{ruledtabular}
	\begin{tabular}{@{}rrrrrrr}
		Molec.   &   BF  & 1.0E-4 & 1.0E-5 & 1.0E-6 & 1.0E-7 & 1.0E-8 \\
		\hline 
		\vspace{-0.6mm}
		$Ala_{10}$ &   609 &  11.4  &  13.9  &  19.8  &  25.2  &  29.7  \\
		\vspace{-0.6mm}
		$Ala_{90}$ &  5009 &   5.8  &  10.9  &  19.1  &  27.9  &  36.4  \\ 
		\vspace{-0.6mm}
		$Ala_{170}$ &  9409 &   4.6  &  10.0  &  18.7  &  28.5  &  38.3  \\
		\vspace{-0.6mm}
		$Ala_{250}$ & 13809 &   3.9  &   9.4  &  18.4  &  28.8  &  39.5  \\
		\vspace{-0.6mm}
		$Ala_{330}$ & 18209 &   3.5  &   9.0  &  18.1  &  29.0  &  40.4  \\
		\vspace{-0.6mm}
		$Ala_{410}$ & 22609 &   3.2  &   8.6  &  17.9  &  29.2  &  41.1  \\
		$Ala_{490}$ & 27009 &   2.9  &   8.4  &  17.7  &  29.3  &  41.6  \\
	\end{tabular}
	\end{ruledtabular}
\end{table}

This property of non-zero two-electron integral distributions in calculations of alanine polymers with 6-31G \cite{HDP_JCP_1972_56_2257} basis is presented in Table \ref{Table_3}. The number of basis functions changes from 609 to 27009 in these calculations. The normalized fractions (in \%) of $N_{2e}(\epsilon)$ are given from third to seventh columns. 

Presented results show that the fraction of small integrals increases, while that of large integrals decreases with increase of molecular system size. This property of non-zero two-electron integral distribution follows from the Theorem proven above. Indeed, the considered example corresponds to the case when the number of non-zero large integrals scales linearly on $N$, whereas the number of non-zero small integrals scales higher than linear. For this reason, the fraction of large integrals reduces and the fraction of small integrals increases when increasing the number of basis functions.

It is well known that the number of leading matrix elements of a density matrix linearly scales with respect to the number of basis functions \cite{Kohn_PhysRev_1959_115_809}. On the other hand, the total number of matrix elements of a density matrix is a quadratic function of $N$. Therefore, the distributions of the density matrix elements and the number of non-zero two-electron integrals on absolute value must have similar properties. An example of such a distribution is presented in Table \ref{Table_4}. Density matrices were obtained in Hartree-Fock calculations of alanine polymers with 6-31G basis. The presented results explicitly display that the fraction of the main matrix elements of density matrix reduces with increase of molecular system size. 

\begin{table}[ht]
	\caption{The linear scaling behavior of the leading Hartree-Fock density matrix elements calculated with 6-31G basis for $CCOH_3-(NCCOH_2CH_3)_{K}-NHCH_3$ alanine polymers. Normalized fractions of the leading density matrix elements are given in \%.}
	\label{Table_4}
	\vspace{-1.3mm}
	\begin{ruledtabular}
	\begin{tabular}{@{}crrrrrr}
		Molec.   &  BF  & 1.0E-2 & 1.0E-3 & 1.0E-4 & 1.0E-5 & 1.0E-6 \\
		\hline 
		\vspace{-0.6mm}
		$Ala_{10}$ &  609 &   2.6  &  11.9  &  26.5  &  38.2  &  45.5  \\
		\vspace{-0.6mm}
		$Ala_{20}$ & 1159 &   1.4  &   6.7  &  15.7  &  24.1  &  30.9  \\ 
		\vspace{-0.6mm}
		$Ala_{30}$ & 1709 &   1.0  &   4.6  &  11.1  &  17.2  &  22.5  \\
		\vspace{-0.6mm}
		$Ala_{40}$ & 2259 &   0.7  &   3.6  &   8.6  &  13.5  &  17.8  \\
		\vspace{-0.6mm}
		$Ala_{50}$ & 2809 &   0.6  &   2.9  &   7.0  &  11.0  &  14.6  \\
		$Ala_{60}$ & 3359 &   0.5  &   2.4  &   5.9  &   9.3  &  12.4  \\
	\end{tabular}
	\end{ruledtabular}
\end{table}

The results presented above permit to consider the calculations at \textit{Step 4} and \textit{Step 5} of Algorithms I and II, correspondingly, as a generalized product of two distributions: the distribution of non-zero two-electron integrals and the distribution of density matrix elements. The properties of these distributions are similar, however, they differ in the number of elements. From the analysis of the scaling properties of these distributions follows that, similar to the scaling property of the non-zero two-electron integrals, the Fock and delta Fock matrix calculations at $Step 4$ and $Step 5$ only asymptotically linearly scale with respect to the number of basis functions without using the cutoff criterions. This means that their power dependencies on the number of basis functions are lower than quadratic, but higher than one. 

The use of the cutoff criteria at \textit{Step 4} and \textit{Step 5} of Algorithms I and II enforces the linear scaling property of the Fock matrix calculations. Indeed, the calculations with the largest two-electron integrals and the largest density matrix elements and the calculations with small integrals and small density matrix elements scale differently with respect to the number of basis functions. The calculations with small values give only insignificant contributions to the two-electron part of the Fock or the delta Fock matrices. The use of the cutoff criteria at \textit{Step 4} and \textit{Step 5} excludes these calculations and preserves only those which give important contributions to the Fock or the delta Fock matrices. Thus, the use of the cutoff criteria transforms the asymptotically linear scaling Fock and delta Fock matrix calculations into the linear scaling ones even at those dimensions where the number of non-zero two-electron integrals is still a non-linear scaling function.

A comparison of Algorithm I and Algorithm II shows that one of the main differences between them arises due to different cutoff criteria. At the same time the difference between these criteria is defined by the fact that the density converges to a constant 
$$
\lim_{m \to \infty} max|D| = const ,
$$ 
while the density difference converges to zero 
$$
\lim_{m \to \infty} max|\Delta D| = 0 ,
$$ 
when self-consistent iterations converge. This means that the density difference cutoff criterion is stronger than the density cutoff criterion. Therefore, the scaling property of Algorithm II with density difference prescreening is better than that of Algorithm I. However, one can note that Algorithm I has better numerical stability than Algorithm II, because it is free from the elimination of meaning digits in real numbers which arises due to the difference operation in the latter one. This is the reason for frequently restarting Fock matrix calculations with Algorithm II in ab initio programs.

It needs emphasis that the cutoff criteria at \textit{Step 4} and \textit{Step 5} of Algorithm I and Algorithm II, correspondingly, do not perfectly linearize the Fock and the delta Fock matrix calculations. Therefore, deviations from the linear scaling behavior must be observed in calculations performed with these algorithms. The confirmations of this fact can be found, for example, in publications \cite{SSF_Science_1996_271_51, CSA_JCP_1996_104_4685}. Thus, the Fock matrix calculations with GvFMM method of $C_{6m^2}H_{6m}$, ($m=1$ to $8$) with 3-21G basis scales as the 1.35 power on the number of basis functions \cite{SSF_Science_1996_271_51}. The quantum chemical tree code \cite{CSA_JCP_1996_104_4685} has shown the best power dependencies between 1.9 and 1.6 in calculations of water and $\alpha$-helices with 3-21G basis. The power dependencies displayed by both methods are definitely below quadratic. However, they are not linear, as it must be in exact linear scaling algorithms. 

Thus, the linear scaling property of the Fock matrix calculation with density and density difference prescreening appears due to three factors: the asymptotically linear scaling property of the number of non-zero two-electron integrals, the linear scaling property of leading elements of the density matrix, and the use of the cutoff criterion for linearization of the Fock matrix calculation.

\section{Linear scaling Fock matrix calculations with stored non-zero two-electron integrals, density difference prescreening, and data compression} \label{Section_4}

The above consideration shows that any algorithm of Fock matrix calculation that can be reduced to Algorithms I or II should possesses the linear scaling property. In particular, the conventional algorithm of Fock matrix calculation with stored non-zero two-electron integrals can be reformulated as an algorithm with density (Algorithm I) or density difference (Algorithm II) prescreening \cite{MBWP_JCC_2003_24_154} by using the data compression method \cite{Mitin_Theochem_2002_115} to store non-zero two-electron integrals and their indices. This means that the Fock matrix calculations with stored two-electron integrals modified in such a way should possess the linear scaling property with respect to the number of basis functions.

\begin{figure}
	\centering
	\subfloat[]{\label{fig_11} 
		\includegraphics[scale=0.635]{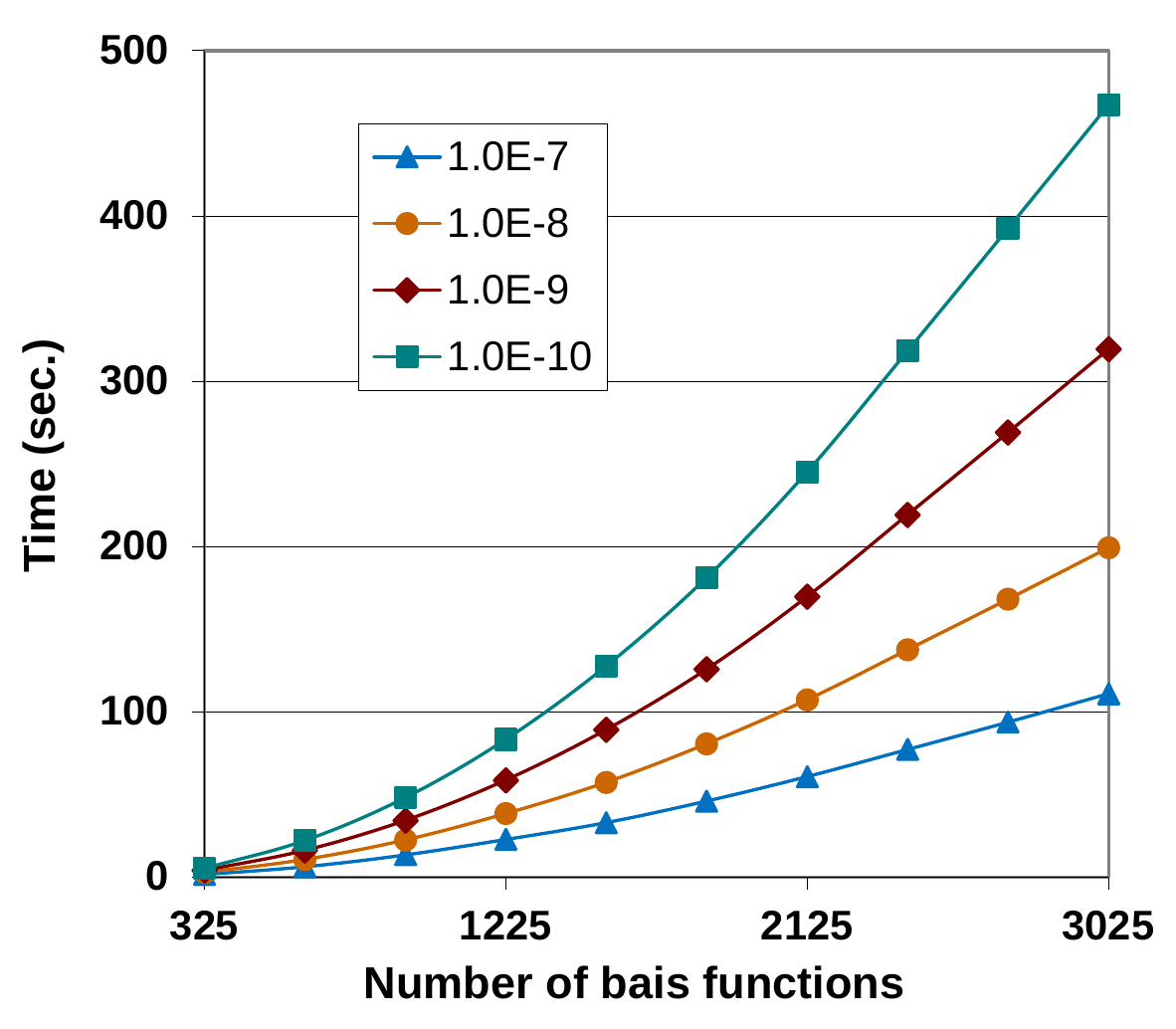}
	} 
	\\
	\subfloat[]{\label{fig_12}
		\includegraphics[scale=0.625]{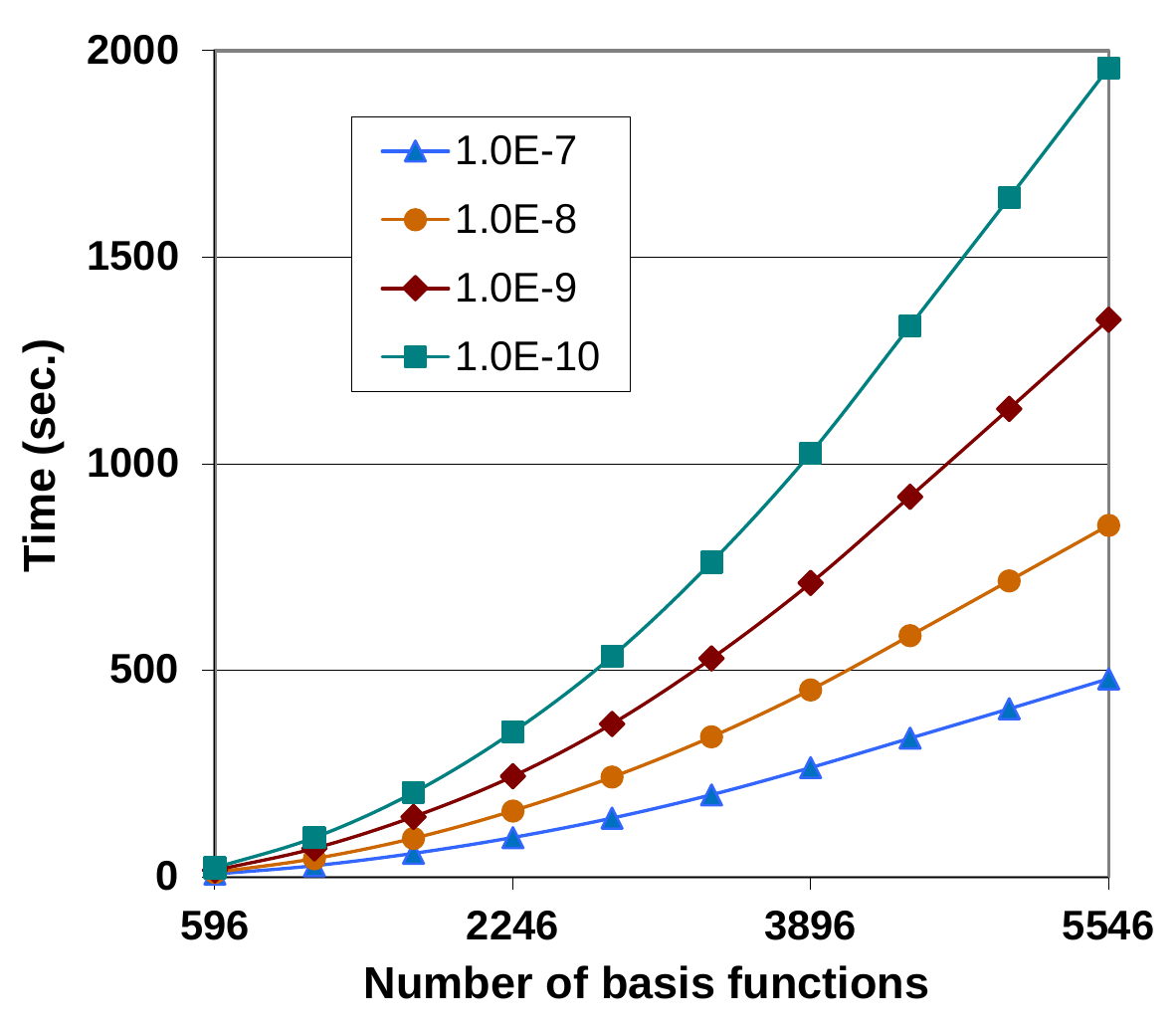}
	} \\
	\subfloat[]{\label{fig_13}
		\includegraphics[scale=0.625]{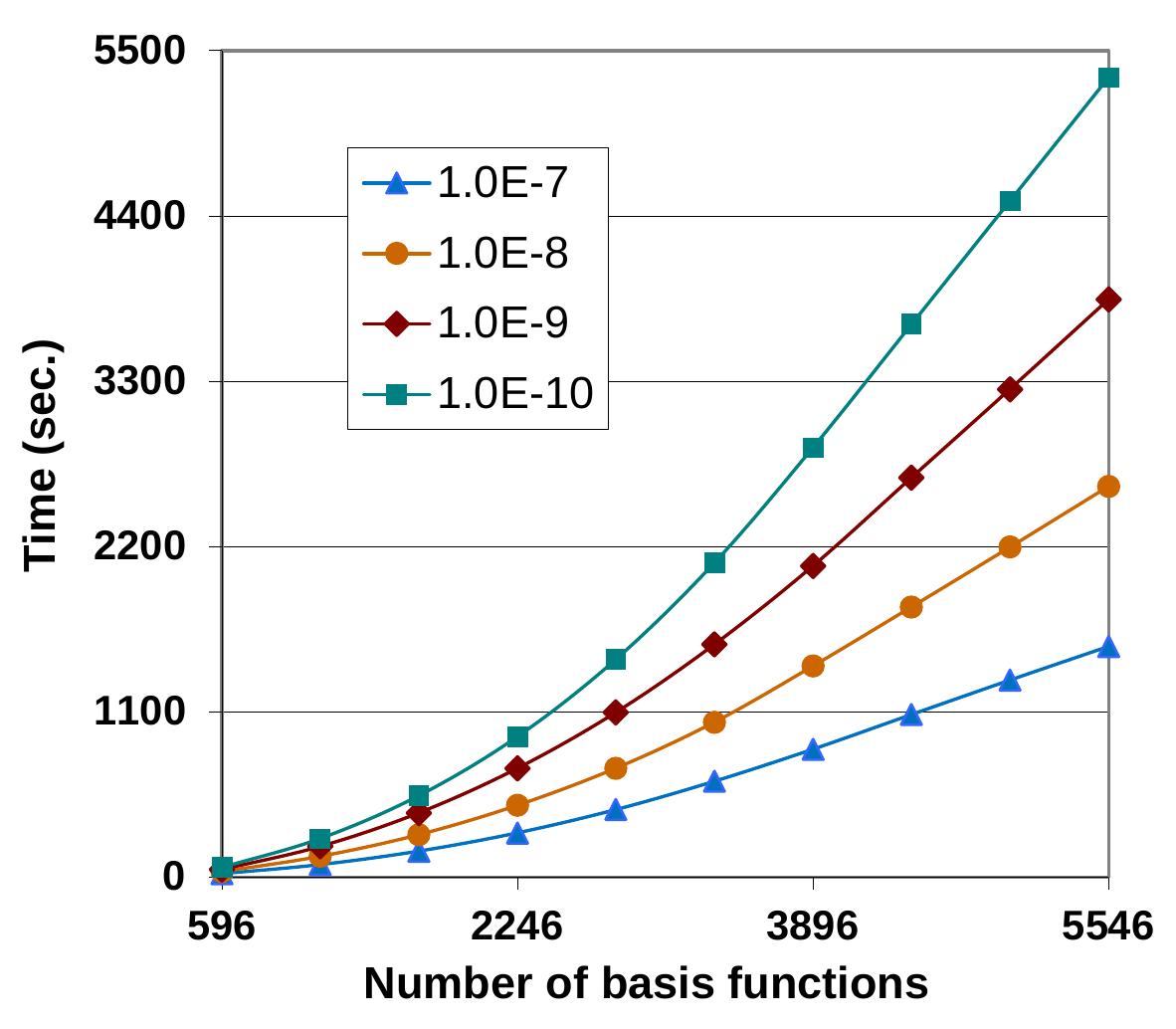}
	} 
	\caption{The time of the Fock matrix calculation for alanine polymers for different cutoff thresholds obtained with basis (a) - STO-3G, (b) - 3-21SP, and (c) - 6-31G.}
	\label{Figure_11_13}
\end{figure}

\begin{figure}
	\centering
	\subfloat[]{\label{fig_14} 
		\includegraphics[scale=0.635]{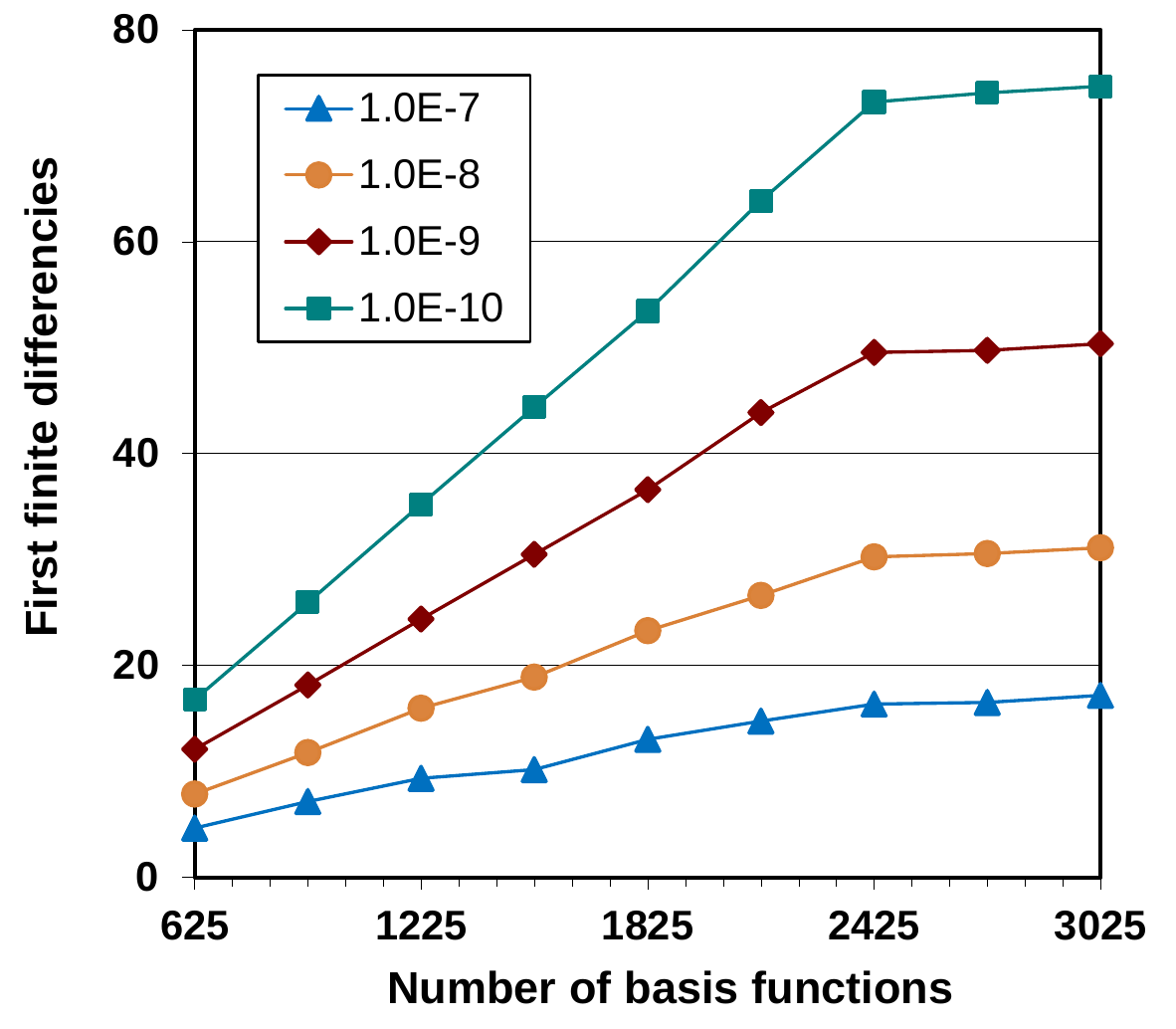}
	}
    \\
	\subfloat[]{\label{fig_15}
		\includegraphics[scale=0.625]{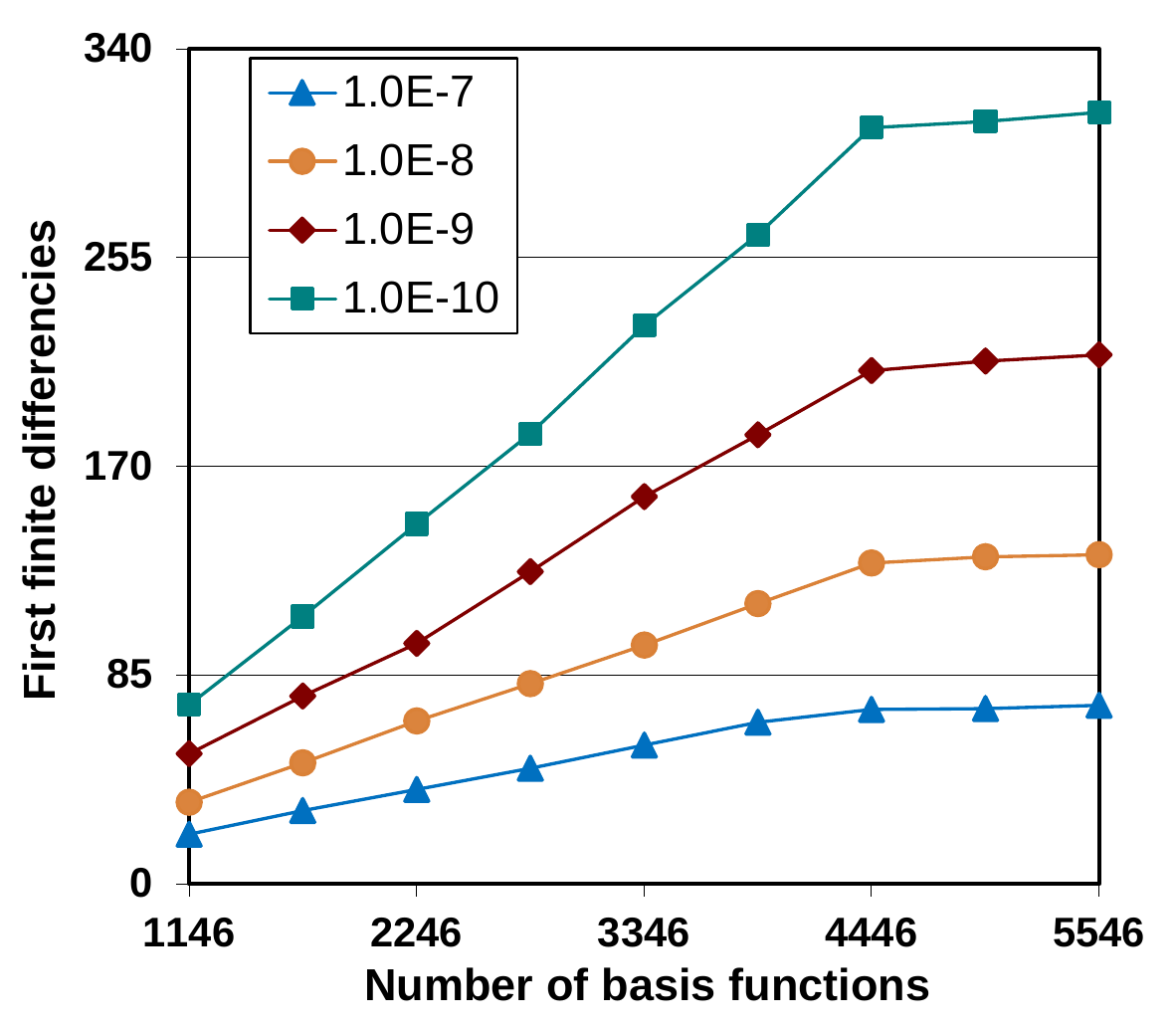}
	} \\
	\subfloat[]{\label{fig_16}
		\includegraphics[scale=0.625]{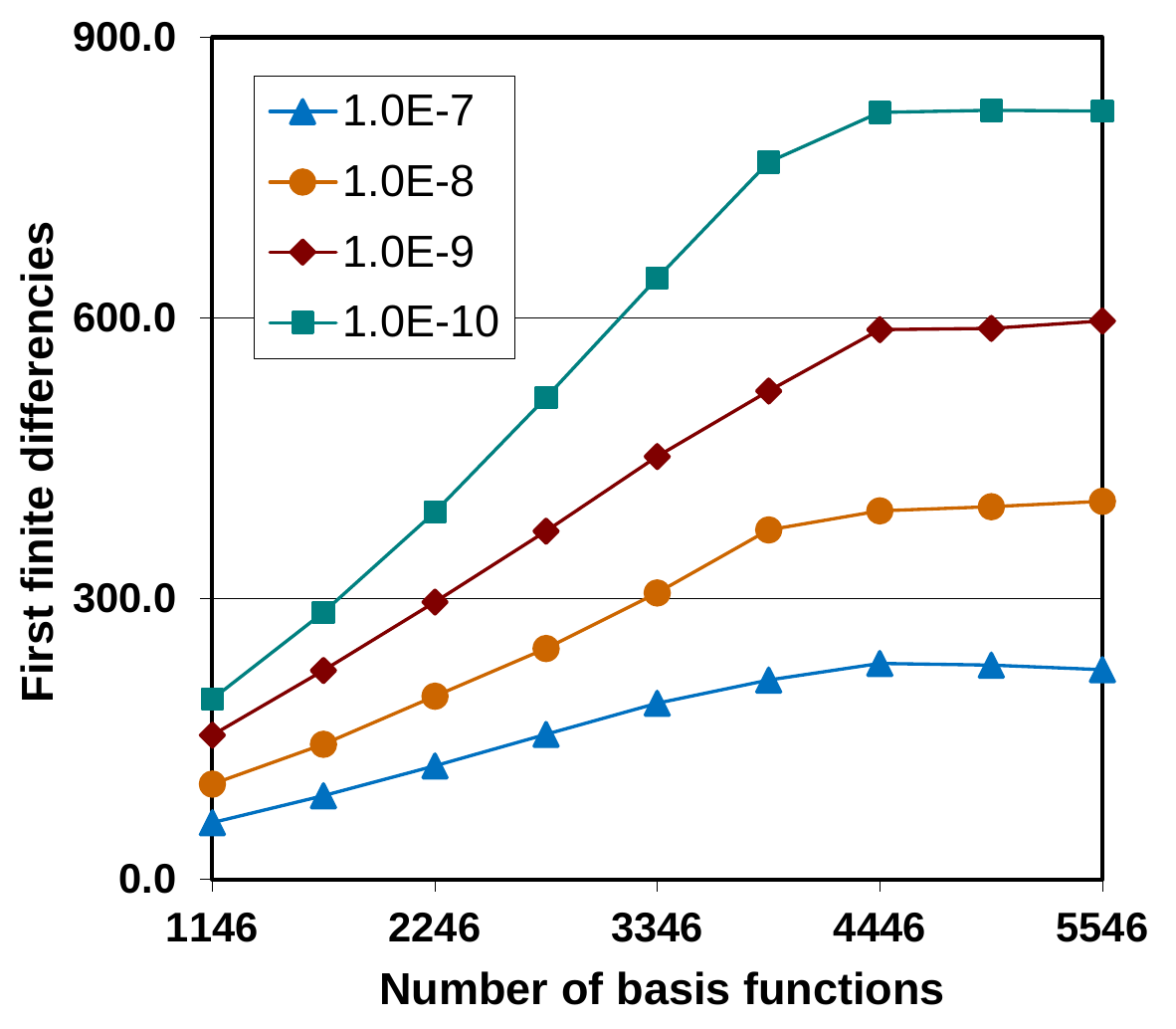}
	} 
	\caption{The first finite differences of the Fock matrix calculation time for different cutoff thresholds obtained with basis (a) - STO-3G, (b) - 3-21SP, and (c) - 6-31G.}
	\label{Figure_14_16}
\end{figure}

The conventional Fock matrix calculation method with stored non-zero two-electron integrals, density difference prescreening and data compression has been realized in program \cite{Mitin_JCC_1998_19_1877} and used for investigations of the scaling property of this method in calculations of alanine polymers $CCOH_3-(NCCOH_2CH_3)_M-NH_2$, when the number of alanine blocks was varied from $M=10$ to $100$. The calculations were performed with STO-3G \cite{HSP_JCP_1969_51_2657}, 3-21SP \cite{MHB_CPL_1996_259_151}, and 6-31G \cite{HDP_JCP_1972_56_2257} basis sets and $\epsilon$ parameter of the prescreening procedure was equal to $10^{-14}$ in all cases. The calculations were performed on PC computer with Intel i7-2600K CPU running at 5.0 GHz. The time of two-electron part Fock matrix calculations in dependence on the number of basis functions for different cutoff precision of two-electron integral calculations are presented on Figures \ref{fig_11} through \ref{fig_13} for basis STO-3G, 3-21SP, and 6-31G correspondingly. The first finite differences for these dependencies, presented on Figures \ref{fig_14} through \ref{fig_16}, were constructed to display their power dependencies.  

A consideration of the numerical derivatives presented on the last figures has to take into account that the density prescreening in the Fock matrix calculation with stored non-zero two-electron integrals is based on using a discrete classification of two-electron integrals on absolute values into five classes \cite{MBWP_JCC_2003_24_154}. Therefore, the time dependencies of the first numerical derivatives obtained with this algorithm and presented on Figure \ref{Figure_14_16} are not as smooth as those which can be obtained with the density difference prescreening algorithm that uses the continuous classification of two-electron integrals on absolute values \cite{AFK_JCC_1982_3_385}.

The dependencies presented in Figures \ref{Figure_11_13} and \ref{Figure_14_16} show that the scaling behavior of the conventional Fock matrix calculation with stored non-zero two-electron integrals, density difference prescreening, and data compression becomes linear at about 2500 basis functions STO-3G basis and about 4000 basis functions for split-valence 3-21SP and 6-31G bases for low precision of two-electron integral calculations. When the two-electron integral cutoff precision increases, the linear scaling regime of the Fock matrix construction shifts to a higher number of basis functions. The first finite difference derivatives in Figures \ref{fig_14}, \ref{fig_15}, and \ref{fig_16}  display this shift especially well.

Thus, numerical tests presented above show that the conventional Fock matrix calculation method with stored non-zero two-electron integrals, density difference prescreening, and data compression possesses the linear scaling property with respect to the number of basis functions. This result is in agreement with the conclusion of the theoretical analysis, given in the previous section, that any algorithm of Fock matrix calculation, which can be reduced to Algorithm I or Algorithm II, has to possess the linear scaling property with respect to the number of basis functions.

\section{Conclusion}

The presented results show that the conventional Fock matrix calculation from the stored non-zero two-electron integrals with density prescreening and data compression possesses the linear scaling property. It follows from the proven Theorem, which says that the total number of non-zero two-electron integrals scales asymptotically linearly with respect to the number of basis functions for large molecular systems. An analysis of the Fock matrix calculations with density and density difference prescreening shows that the linear scaling property of this algorithm arises due to linear scaling properties of the number of non-zero two-electron integrals and the number of leading matrix elements of the density matrix. The density prescreening method reinforces this property, while the density difference prescreening further strengthens of it. An application of the data compression technique to store two-electron integrals and their indices permits the reformulation of the conventional Fock matrix calculations as a method with density or density difference prescreening. Therefore, the Fock matrix calculation from stored non-zero two-electron integrals with density or density difference prescreening and data compression also possesses the linear scaling property with respect to the number of basis functions. The numerical calculations demonstrate that this property begins at 2500 to 4000 basis functions in dependence on the basis function type in molecular calculations.

\bibliography{manuscript_refs}

\end{document}